\def\input@path{{styles/}}
\newcommand{\SoCG}[1]{}
\newcommand{\NotSoCG}[1]{#1}%
\newcommand{\SoCG}[1]{#1}
\newcommand{\NotSoCG}[1]{}%
   \def\UseBibLatex{1}%
\newcommand{\SarielComp}[1]{}
\newcommand{\NotSarielComp}[1]{#1}%
\newcommand{\SarielComp}[1]{#1}%
\newcommand{\NotSarielComp}[1]{}%
\theoremstyle{plain}%
\newtheorem{theorem}{Theorem}[section]
\newtheorem{lemma}[theorem]{Lemma}
\newtheorem{corollary}[theorem]{Corollary}
\newtheorem{claim}[theorem]{Claim}%
\newtheorem{fact}[theorem]{Fact}
\theoremstyle{plain}%
\newtheorem*{remark:unnumbered}[theorem]{Remark}%
\newtheorem{remark}[theorem]{Remark}%
\newtheorem{definition}[theorem]{Definition}
\newtheorem{defn}[theorem]{Definition}
\newcommand{\myqedsymbol}{\rule{2mm}{2mm}}
\theoremstyle{nonumberplain}%
\newtheorem{proof}{Proof:}%
   \theoremstyle{remark}
   \newtheorem{fact}[theorem]{Fact}
\definecolor{blue25emph}{rgb}{0, 0, 11}
\providecommand{\emphic}[2]{%
   \textcolor{blue25emph}{%
      \textbf{\emph{#1}}}%
   \index{#2}}
\providecommand{\emphi}[1]{\emphic{#1}{#1}}
\definecolor{almostblack}{rgb}{0, 0, 0.3}
\providecommand{\emphw}[1]{{\textcolor{almostblack}{\emph{#1}}}}%
\newcommand{\SarielThanks}[1]{%
   \thanks{%
      Department of Computer Science; %
      University of Illinois; %
      201 N. Goodwin Avenue; %
      Urbana, IL, 61801, USA; %
      \href{mailto:spam@illinois.edu}{sariel@illinois.edu}; %
      \url{http://sarielhp.org/}. %
   #1%
   }%
}
\newcommand{\MariaThanks}[1]{%
   \thanks{%
      Department of Computer Science; %
      University of Illinois; %
      201 N. Goodwin Avenue; %
      Urbana, IL, 61801, USA; %
      \href{mailto:spam@illinois.edu}{marial5@illinois.edu}. %
   #1%
   }%
}
\newcommand{\HLink}[2]{\hyperref[#2]{#1~\ref*{#2}}}
\newcommand{\HLinkY}[2]{\hyperref[#2]{#1}}
\newcommand{\HLinkSuffix}[3]{\hyperref[#2]{#1\ref*{#2}{#3}}}
\newcommand{\thmlab}[1]{{\label{theo:#1}}}
\newcommand{\thmref}[1]{\HLink{Theorem}{theo:#1}}
\newcommand{\thmrefY}[2]{\HLinkY{#2}{theo:#1}}
\newcommand{\Xthmref}[1]{\noexpand{\noexpand\HLink{Theorem}{theo:#1}}}%
\newcommand{\corlab}[1]{\label{cor:#1}}
\newcommand{\corref}[1]{\HLink{Corollary}{cor:#1}}%
\newcommand{\seclab}[1]{\label{sec:#1}}
\newcommand{\secref}[1]{\HLink{Section}{sec:#1}}
\newcommand{\clmlab}[1]{\label{claim:#1}}
\newcommand{\clmref}[1]{\HLink{Claim}{claim:#1}}
\newcommand{\itemlab}[1]{\label{item:#1}}
\newcommand{\itemref}[1]{\HLinkSuffix{}{item:#1}{}}
\newcommand{\factlab}[1]{\label{fact:#1}}%
\newcommand{\factref}[1]{\HLink{Fact}{fact:#1}}%
\newcommand{\remlab}[1]{\label{rem:#1}}
\newcommand{\remref}[1]{\HLink{Remark}{rem:#1}}%
\newcommand{\lemlab}[1]{\label{lemma:#1}}
\newcommand{\lemref}[1]{\HLink{Lemma}{lemma:#1}}%
\newcommand{\Xlemref}[1]{\noexpand{\noexpand\HLink{Lemma}{lemma:#1}}}%
\providecommand{\deflab}[1]{\label{def:#1}}
\renewcommand{\deflab}[1]{\label{def:#1}}
\newcommand{\defref}[1]{\HLink{Definit\-ion}{def:#1}}
\newcommand{\defrefY}[2]{\hyperref[def:#2]{#1}}
\newcommand{\obsrefY}[2]{\hyperref[observation:#1]{#2}}
\providecommand{\eqlab}[1]{}%
\renewcommand{\eqlab}[1]{\label{eq:#1}}
\newcommand{\Eqlab}[1]{\label{eq:#1}}
\newcommand{\Eqref}[1]{\HLinkSuffix{Eq.~(}{eq:#1}{)}}
\newcommand{\remove}[1]{}%
\newcommand{\Set}[2]{\left\{ #1 \;\middle\vert\; #2 \right\}}
\newcommand{\pth}[2][\!]{\mleft({#2}\mright)}%
\newcommand{\pbrcx}[1]{\left[ {#1} \right]}%
\newcommand{\Prob}[1]{\mathop{\mathbb{P}}\!\pbrcx{#1}}
\newcommand{\Ex}[2][\!]{\mathop{\mathbb{E}}#1\pbrcx{#2}}
\newcommand{\ExSym}{\mathop{\ExChar}}%
\newcommand{\ExChar}{\mathbb{E}}%
\newcommand{\ExCond}[2]{\ExSym\!\left[%
       #1 \;\middle\vert\; #2 \right]}
\newcommand{\ProbLTR}{\mathbb{P}}%
\newcommand{\ProbCond}[2]{\mathop{\ProbLTR}\!\left[%
       #1 \;\middle\vert\; #2 \right]}
\providecommand{\IntRange}[1]{\mleft\llbracket #1 \mright\rrbracket}
\newcommand{\IRX}[1]{\IntRange{#1}}%
\newcommand{\IRY}[2]{\left\llbracket #1:#2 \right\rrbracket}
\newcommand{\VV}{\Mh{\mathsf{V}}}%
\newcommand{\VX}[1]{\VV\pth{#1}}%
\providecommand{\EG}{\Mh{\mathsf{E}}}%
\providecommand{\EGX}[1]{\EG\pth{#1}}%
\newcommand{\ceil}[1]{\left\lceil {#1} \right\rceil}
\newcommand{\floor}[1]{\left\lfloor {#1} \right\rfloor}
\newcommand{\cardin}[1]{\left| {#1} \right|}%
\renewcommand{\th}{th\xspace}
\renewcommand{\Re}{\mathbb{R}}%
\newlist{compactenumA}{enumerate}{5}%
\setlist[compactenumA]{topsep=0pt,itemsep=-1ex,partopsep=1ex,parsep=1ex,%
   label=(\Alph*)}%
\newlist{compactenuma}{enumerate}{5}%
\setlist[compactenuma]{topsep=0pt,itemsep=-1ex,partopsep=1ex,parsep=1ex,%
   label=(\alph*)}%
\newlist{compactenumI}{enumerate}{5}%
\setlist[compactenumI]{topsep=0pt,itemsep=-1ex,partopsep=1ex,parsep=1ex,%
   label=(\Roman*)}%
\newlist{compactenumi}{enumerate}{5}%
\setlist[compactenumi]{topsep=0pt,itemsep=-1ex,partopsep=1ex,parsep=1ex,%
   label=(\roman*)}%
\newlist{compactitem}{itemize}{5}%
\setlist[compactitem]{topsep=0pt,itemsep=-1ex,partopsep=1ex,parsep=1ex,%
   label=\ensuremath{\bullet}}%
\newcommand{\UsePackage}[1]{%
  \IfFileExists{styles/#1.sty}{%
      \usepackage{styles/#1}%
   }{%
      \IfFileExists{../styles/#1.sty}{%
         \usepackage{../styles/#1}%
      }{%
         \usepackage{#1}%
      }%
   }%
}
\providecommand{\BibLatexMode}[1]{}
\providecommand{\BibTexMode}[1]{#1}
  \renewcommand{\BibLatexMode}[1]{}
  \renewcommand{\BibTexMode}[1]{#1}
  \renewcommand{\BibLatexMode}[1]{#1}
  \renewcommand{\BibTexMode}[1]{}
\numberwithin{figure}{section}%
\numberwithin{table}{section}%
\numberwithin{equation}{section}%
\providecommand{\Mh}[1]{#1}%
\newcommand{\M}{\Mh{\mathcal{M}}}
\newcommand{\G}{\Mh{\mathsf{G}}}
\renewcommand{\H}{\mathsf{H}}
\providecommand{\K}{\mathsf{K}}
\renewcommand{\K}{\mathsf{K}}
\newcommand{\Event}{\mathcal{E}}%
\newcommand{\nFailX}[1]{\mathcalb{f}\pth{#1}}
\newcommand{\pdX}[1]{\Mh{\zeta}\pth{#1}}
\newcommand{\gC}{\Mh{\mathcalb{g}}}%
\newcommand{\gX}[1]{\gC_{#1}}%
\newcommand{\goodX}[1]{\gC\pth{#1}}%
\newcommand{\goodY}[2]{\gC\pth{#1, #2}}%
\newcommand{\Rch}{\Mh{\mathcal{R}}}%
\newcommand{\RchX}[1]{\Rch_{#1}}%
\newcommand{\prevX}[1]{\mathrm{prev}\pth{#1}}
\newcommand{\dGX}[1]{\Mh{\mathcalb{d}}_{#1}}
\newcommand{\dGZ}[3]{\Mh{\dGX{#1}}\pth{#2,#3}}
\newcommand{\dGY}[2]{\Mh{\mathcalb{d}}\pth{#1,#2}}
\newcommand{\DistrC}{\mathcal{D}}%
\newcommand{\DistrY}[2]{\DistrC\pth{#1,#2}}
\newcommand{\lbC}{\Mh{\mathcalb{\ell}}}
\newcommand{\lbY}[2]{\lbC\pth{#1,#2}}
\newcommand{\lbZ}[3]{\lbC_{\leq #3}\pth{#1,#2}}
\providecommand{\TPDF}[2]{\texorpdfstring{#1}{#2}}
\newcommand{\cODS}{c_6}
\newcommand{\cKHop}{c_7}
\newcommand{\sL}{\Mh{\mathcalb{l}}}%
\newcommand{\sB}{\Mh{\mathcalb{b}}}%
\newcommand{\BSet}{\Mh{\mathcal{B}}}%
\newcommand{\set}[1]{\left\{ {#1} \right\}}
\newcommand{\pr}{\varrho}
\renewcommand{\P}{\Mh{P}}%
\newcommand{\eps}{\Mh{\varepsilon}}%
\newcommand{\kk}{\Mh{\mathsf{k}}}%
\newcommand{\nbl}{\Mh{\nabla}}%
\newcommand{\pA}{\Mh{u}}
\newcommand{\pB}{\Mh{v}}
\newcommand{\pC}{\Mh{x}}%
\newcommand{\p}{\pA}
\newcommand{\q}{\pB}
\newcommand{\UC}{[0,1)^d}
\newcommand{\order}{\Mh{\sigma}}%
\newcommand{\HC}{\Mh{\mathcal{H}}}%
\newcommand{\ball}{\Mh{\mathcalb{b}}}
\newcommand{\ballY}[2]{\ball\pth{#1, #2}}
\newcommand{\normX}[1]{\left\| {#1} \right\|}
\newcommand{\dY}[2]{\normX{#1 #2}}
\newcommand{\etal}{\textit{et~al.}\xspace}
\newcommand{\LSO}{\Term{LSO}\xspace}%
\newcommand{\Term}[1]{\textsf{#1}}
\newcommand{\Eps}{\Mh{\mathcal{E}}}%
\newcommand{\epsA}{\eps}%
\newcommand{\orderset}{\Pi}
\newcommand{\ordAll}{\orderset^+}%
\newcommand{\logeps}{\log(1/\eps)}
\newcommand{\BadSet}{\Mh{{B}}}%
\newcommand{\EBadSet}{\Mh{{B^+}}}%
\newcommand{\NN}{\mathbb{N}}%
\newcommand{\epsR}{\Mh{\vartheta}}%
\newcommand{\tldO}{\scalerel*{\widetilde{O}}{j^2}}%
\newcommand{\dependable}{dependable\xspace}
\newcommand{\prS}{\Mh{\psi}}
\newcommand{\si}[1]{#1}
\providecommand{\tcite}[1]{\cite{#1}}
\newcommand{\sBVal}{\ceil{ \frac{\cKHop \nbl}{\prS} \ln n}}
\newcommand{\sBValS}{\ceil{ \smash{\frac{\cKHop \nbl}{\prS} } \ln n}}
\newcommand{\GGen}{\widehat{\G}}
   \newcommand{\myparagraph}[1]{\subparagraph{#1}}
   \newcommand{\myparagraph}[1]{\paragraph*{#1}}
\begin{document}

\title{Dependable Spanners via Unreliable Edges}%
\NotSoCG{%
   \author{%
      Sariel Har-Peled%
      \SarielThanks{Work on this paper was partially supported by NSF
         AF awards CCF-1907400 and CCF-2317241. %
      }%
      \and%
      Maria C. Lusardi%
      \MariaThanks{}%
   }%
}%

\SoCG{%
   \author{Sariel Har-Peled}%
   {Department of Computer Science, University of Illinois, 201
      N. Goodwin Avenue, Urbana, IL 61801, USA %
      \and
      \url{https://sarielhp.org}%
   }%
   {sariel@illinois.edu}%
   {https://orcid.org/0000-0003-2638-9635}%
   {Work on this paper was partially supported by a NSF AF award
      CCF-2317241.}%

   \Copyright{Sariel Har-Peled and Maria C. Lusardi}

   \ccsdesc[500]{Theory of computation~Computational geometry}%

   \keywords{Spanners}%

   \category{} %

   \relatedversion{}

}

\SoCG{%
   \authorrunning{S. Har-Peled and M. C. Lusardi}%
}%
\date{\today}

\maketitle

\begin{abstract}
    Let $\P$ be a set of $n$ points in $\Re^d$, and let
    $\eps,\prS \in (0,1)$ be parameters. Here, we consider the task of
    constructing a $(1+\eps)$-spanner for $\P$, where every edge might
    fail (independently) with probability $1-\prS$. For example, for
    $\prS=0.1$, about $90\%$ of the edges of the graph
    fail. Nevertheless, we show how to construct a spanner that
    survives such a catastrophe with near linear number of edges.

    The measure of reliability of the graph constructed is how many
    pairs of vertices lose $(1+\eps)$-connectivity. Surprisingly,
    despite the spanner constructed being of near linear size, the
    number of failed pairs is close to the number of failed pairs if
    the underlying graph was a clique.

    Specifically, we show how to construct such an exact dependable
    spanner in one dimension of size $O(\tfrac{n}{\prS} \log n)$,
    which is optimal.  Next, we build an $(1+\eps)$-spanners for a set
    $\P \subseteq \Re^d$ of $n$ points, of size $O( C n \log n )$,
    where $C \approx 1/\bigl(\eps^{d} \prS^{4/3}\bigr)$. Surprisingly,
    these new spanners also have the property that almost all pairs of
    vertices have a $\leq 4$-hop paths between them realizing this
    short path.
\end{abstract}

\section{Introduction}

\subsection{Background}

\myparagraph{Spanners.}%

Given a weighted finite graph $\M$ over a set of points $\P$ (if $\M$
is a finite metric, then $\M$ is a clique), a \emphw{$t$-spanner} is a
subgraph $\G \subseteq \M$, such that for all $\pA,\pB \in \P$, we
have that
$\dGZ{\M}{\pA}{\pB} \leq \dGZ{\G}{\pA}{\pB} \leq t \cdot
\dGZ{\M}{\pA}{\pB}$, where $\dGX{\M}$ and $\dGX{\G}$ denote the
shortest path length in $\M$ and $\G$, respectively.  The
\emph{weight} of an edge $\pA\pB \in \EGX{\G}$ is
$\dGZ{\M}{\pA}{\pB}$. A lot of work went into designing and
constructing spanners with various properties.  The main goal in
spanner constructions is to have small \emph{size}, that is, to use as
few edges as possible. Other properties include low degrees
\cite{abcgh-sggsd-08,cc-srgs-10,s-gsfed-06}, low weight
\cite{bcfms-cgsnq-10,gln-fgacs-02}, low diameter
\cite{ams-rdags-94,ams-dagss-99}, or resistance to failures. See
\cite{ns-gsn-07}.

\myparagraph{Fault tolerant spanners.}  A desired property of
spanner is fault tolerance
\cite{lns-eacft-98,lns-iacft-02,l-nrftg-99}. A graph $\G=(\P,\EG)$ is
an \emph{$r$-fault tolerant $t$-spanner} if for any set $\BadSet$ of
failed vertices with $\cardin{\BadSet} \leq r$, the graph
$\G \setminus \BadSet$ is still a $t$-spanner.  The disadvantage of
$r$-fault tolerance is that each vertex must have degree at least
$r+1$, otherwise the vertex can be isolated by deleting its
neighbors. Therefore, the graph has size at least $\Omega(rn)$. In
particular, for $r$ large the size of the fault-tolerant spanner is
prohibitive.

\myparagraph{Region fault tolerant spanners.}  Abam \etal
\cite{adfg-rftgs-09} showed that one can build a geometric spanner
with near linear number of edges, so that if the deleted set are all
the points belonging to a convex region (they also delete the edges
intersecting this region), then the residual graph is still a spanner
for the remaining points\footnote{More precisely, the remaining graph,
   as edges passing through the ``bad'' region are also deleted.}.

\myparagraph{Vertex robustness.}
For a function~$f:\NN \xrightarrow{} \Re^+$ a $t$-spanner $\G$ is
$f$-robust \cite{bdms-rgs-13}, if for any set of failed points
$\BadSet$ there is an extended set $\EBadSet$ (that contains
$\BadSet$) with size at most $f\pth{\cardin{\BadSet}}$ such that the
residual graph $\G\setminus\BadSet$ has a $t$-path for any pair of
points~$\p,\q \in \P \setminus \EBadSet$. The function $f$ controls
the robustness of the graph -- the slower the function grows the more
robust the graph is.  For~$\epsR\in(0,1)$, a spanner that is
$f$-robust with~$f(k)=(1+\epsR)k$ is a \emphi{$\epsR$-reliable}
spanner~\cite{bho-sda-20}.

\myparagraph{Reliable spanners for unreliable vertices.} %
Buchin \etal \cite{bho-sda-20} showed a construction of reliable exact
spanners of size $\tldO\pth{n\log n}$ in one dimension, and of
reliable $(1+\eps)$-spanners of size
$\tldO\pth{n\log n \log\!\log^6\!n }$ in higher dimensions (the
constant in the $\tldO$ depends on the dimension, $\eps$, and the
reliability parameter $\epsR$). An alternative construction, with
slightly worse bounds, was given by Bose \etal \cite{bcdm-norgm-18}.
Up to polynomial factors in $\log \log n$, this matches a lower bound
of Bose \etal \cite{bdms-rgs-13}.  Buchin \etal \cite{bho-srsal-22}
showed that the size of the construction can be improved to
$\tldO(n \log\log^3 n \log \log \log n)$ if the attacker choices
(i.e., the failed set of vertices) is oblivious to the randomized
construction of the spanner. Some of these constructions use \LSO{}s,
described next.

\myparagraph{Locality sensitive orderings.}

The concept of \emphw{locality-sensitive orderings} (\LSO) was
introduced by Chan \etal \cite{chj-lota-20}. Informally, they showed
that $\Re^d$ can be multi-embedded into the real line, such that
distances are roughly preserved.
\begin{definition}
    \deflab{local}%
    For a pair of points $\p,\q \in \HC = \UC$, an order $\order$ over
    the points of $\HC$ is \emphi{$\eps$-local}, for $\eps \in (0,1)$,
    if
    \begin{equation*}
        \order(\p,\q)
        \quad\subseteq\quad
        \ballY{\p}{\eps \ell}
        \cup
        \ballY{\q}{\eps \ell} ,
        \qquad\text{where} \qquad%
        \ell = \dY{\p}{\q},
    \end{equation*}
    where $\ballY{\p}{r}$ denotes the \emphi{ball} of radius $r$
    centered at $\p$.
\end{definition}
Namely, all the points between $\p$ and $\q$ in $\order$ are in the
vicinities of $\p$ and $\q$ in $\HC$.

Surprisingly, Chan \etal \cite{chj-lota-20} showed that one
can compute such a ``universal'' set of orderings $\Pi$ (i.e., a set
of \emphw{locality-sensitive orderings}), of size
$O(\Eps^d \log \Eps)$, where $\Eps = 1/\eps$. This set of orderings
can be easily computed, and computing the order between any two points
according to a specified order (in the set) can be done quickly.
Using \LSO{}s some problems in $d$ dimensions, are reduced to a
collection of problems in one dimension.  Recently, Gao and Har-Peled
\cite{gh-nolso-24} showed improved construction of \LSO{}s of size
$O(\Eps^{d-1} \log \Eps)$, but unfortunately, these \LSO{}s have
slightly weaker properties.

\subsection{Our results}

If one is interested in building near linear size spanners, that
survive massive edge failure (i.e., constant fraction of the edges),
then this seems hopeless. Indeed, one can easily isolate (completely)
a large fraction of the vertices of the graph by deleting the edges
attached to them. One can interpret an edge failure as the failure of
both its endpoints, and use one of the constructions of reliable
spanners mentioned above, but this seems wasteful -- it assigns an
edge failure the same status as a failure of two vertices (i.e., one
can simulate the failure of an edge $uv$, as the failure of one of its
endpoints $u$ or $v$), which seems excessive. In addition, this model
can withstand only (roughly) $n$ failures till the graph effectively
disappears as all the vertices are ``deleted''.

Here, we initiate the study of how to construct such spanners that can
survive massive edge failure.  Specifically, we imagine that given the
constructed spanner graph $\G$, and a parameter $\prS \in (0,1)$, the
edge failures are random and independent. Specifically, an edge fail
with probability $1-\prS$ (i.e., selected to the ``surviving'' graph
with probability $\prS$). Our measure of the quality of $\G$, is how
many pairs of vertices in $\G$ lose their spanning property in the
residual graph. We refer to a graph that can survive such an attack
and have ``few'' failing pairs as being a \emphw{dependable spanner},
as to distinguish this concept from the reliable spanners discussed
above (which handle vertex failures).

There are other natural choices for the underlying model -- for
example, make the failure probability of an edge a function of its
length. This in turn opens a Pandora box of further choices -- What
this function should be for the problem to make sense? Should it
depends on the local density, or location of its nodes? Etc.  To keep
things manageable, we chose the above simpler settings, and leave such
questions for further research.

\myparagraph{On the optimal deficiency in one dimension.}

The natural starting point is the complete graph $K_n$ over
$\IRX{n} = \{ 1,\ldots, n\}$. Let $\lbY{n}{\prS}$ denote the expected
number of pairs of points in $\IRX{n}$ that no longer have a straight
path in a graph sampled $\H \sim \DistrY{K_n}{\prS}$ (i.e., such a
failed pair $ i < j$ has the property that the shortest path in $\H$
between $i$ and $j$ is longer than $j-i$). The quantity
$\lbY{n}{\prS}$ is the \emphw{optimal deficiency}, and it provides a
lower bound on the number of such failed pairs in any construction.

In \secref{opt:deficiency}, we study $\lbY{n}{\prS}$.  A relatively
straightforward upper bound of $O(n/\prS^2)$ on the deficiency is
provided is \secref{u:b:easy}. To improve the upper bound, we
introduce the concept of a block -- the idea is to consider a
consecutive interval of vertices, and how many reachable vertices
there are in such a block for a fixed source. One can then argue that
the number of reachable vertices between two consecutive blocks,
behaves similarly to what expansion guarantees in a bipartite
expander.  This leads to an improved upper bound
$\lbY{n}{\prS} = O( \tfrac{n}{\prS} \log \tfrac{1}{\prS})$. See
\secref{u:b:optimal} for details. A surprisingly simple argument then
shows that there is a matching lower bound
$\lbY{n}{\prS} = \Omega( \tfrac{n}{\prS} \log \tfrac{1}{\prS})$, see
\lemref{l:b:clique}. Thus, for the optimal deficiency, we have
$\lbY{n}{\prS} = \Theta( \tfrac{n}{\prS} \log \tfrac{1}{\prS})$, see
\thmref{clique}.

\myparagraph{Constructing one dimensional dependable exact spanners.}

Equipped with the \si{above} bounds on the optimal deficiency, we
prove a lower bound on the number of edges such a graph must have --
specifically, in \secref{l:b:graph}, we show that a dependable exact
spanner on $\IRX{n}$ must have $\Omega(\tfrac{n}{\prS} \log n)$ edges,
if the deficiency is to be linear in $n$. A construction of dependable
spanner (matching this lower bound) is natural -- one connects all
vertices that are in distance $O(\tfrac{1}{\prS} \log n )$ from each
other along the line. It is not hard to show that this graph has
deficiency that is at most one bigger than the optimal, see
\lemref{long:paths}.

\myparagraph{Constructing one dimensional dependable exact spanners with
   few hops.}

For our application of building dependable spanners in higher
dimensions, we need spanners that have few hops (i.e., for almost all
pairs there is a straight path with at most $4$ edges). This turns out
to be doable, by building a $4$-hop spanner on the blocks, and then
replacing each block-edge by a bipartite clique. In the resulting
graph, the number of edges increases to $O_\prS(n \log^2 n)$, see
\clmref{spanner}.

To reduce the number of edges, we replace every bipartite clique with
a random bipartite graph (i.e., a random bipartite expander). This
results in a graph with $O(\tfrac{n}{\prS^{4/3}} \log n )$ edges, that
has the desired $4$-hop property. The proof of correctness is a bit
more subtle, as one needs to carefully argue about the underlying
expansion. One can improve the dependency on the number of
hops. Specifically, \thmref{k:hop} shows that one can construct a
spanner $\G$ with $O( (n/\prS^{1+1/(\kk-1)}) \log n)$ edges, such that
(in expectation) at most $O( n/\prS^{1+1/(\kk-1)} \log (1/\prS) )$
pairs are not connected via a $\kk$ hop path in a graph
$\H \sim \DistrY{\G}{\prS}$.

\myparagraph{Constructing dependable \TPDF{$(1+\eps)$}{1+eps}-spanners
   in \TPDF{$\Re^d$}{Rd}.}

The last step of converting these one dimensional dependable spanners
to dependable spanners in higher dimensions is by now standard. Given
a point set $\P$, we plug-in the above construction of dependable
one-dimensional spanners into the \LSO{}s provided by the construction
of Chan \etal \cite{chj-lota-20}.  Specifically, given a set $\P$ of
$n$ points in $\Re^d$, we show (see \thmref{main}) a construction of a
graph $\G$ with $C n \log n$ edges, such that in expectation at most
$C n$ pairs of points fail to be $(1+\eps)$-spanned in the randomly
sampled graph $\H \sim \DistrY{\G}{\prS}$, where each edge survives
with probability $\prS$, and $C \approx O(\eps^{-d}
p^{-4/3})$. Significantly, all the well spanned pairs are connected
via short $4$-hop paths.

\myparagraph{Similarity to reliable spanners.}

As in the work of Buchin \etal \cite{bho-sda-20} on reliable spanners,
we first solve the one dimensional version of the problem -- this is
the main contribution of this work. Once we resolve the one
dimensional problem, the \LSO{} construction of Chan \etal provides a
black-box solution to the higher dimensional version of the problem.
However, this is not the case for Buchin \etal \cite{bho-sda-20} --
their bound on the length of the generated path relied on a somewhat
involved refinement argument that does not make sense here.

The conceptual difference between the two problems is that while
reliable spanners do not ``mind'' long paths, our version of the
problem can not easily handle such paths in the same way (as longer
paths have higher probability of failure). In particular, our
construction ultimately requires the existence of ``many'' short-hop
paths between vertices, to provide the desired dependability -- this
also makes the analysis here simpler than Buchin \etal
\cite{bho-sda-20} in the $d\geq 2$ case.  Informally, the main
challenge in this work is quantifying the (expected) amount of damage
caused by the (random) attack, while in the reliable spanners settings
the challenge is on ``bypassing'' such damage.

In summary, while some underlying basic initial ideas are similar
(e.g., use bipartite expanders and \LSO{}s), the different measures
used in the two problems makes them fundamentally different.

\subsection*{Paper organization}%
We start with basic definitions in \secref{prelims}.  We study the
optimal deficiency of the clique over $\IRX{n}$ in
\secref{opt:deficiency}.  We present the one dimensional dependable
spanner in \secref{1:dim}.  We modify this construction to have few
hops in \secref{1:dim:few:hops}.  The final step of constructing the
dependable spanner in $\Re^d$ is presented in \secref{main}.  Some
open problems are described in \secref{conclusions}.

\section{Preliminaries}
\seclab{prelims}

For two integers $\alpha \leq \beta$, let
$\IRY{\alpha}{\beta} = \set{ \alpha, \alpha+1,\ldots, \beta}$.  Let
$\IRX{n} = \IRY{1}{n} = \{ 1,\ldots, n\}$, and let $K_n$ denote the
complete graph over $\IRX{n}$. Consider a subgraph $\G \subseteq
K_n$. An edge $ij \in \EGX{\G}$ has weight $|i-j|$. Let
$\dGY{i}{j} =\dGZ{\G}{i}{j}$ denote the length of the shortest path in
$\G$ from $i$ to $j$. A graph $\G \subseteq K_n$ is an \emphi{exact
   spanner} if for all $i,j \in \IRX{n}$ we gave that
$\dGY{i}{j} = |i-j|$.

\begin{definition}
    \deflab{b:clique}%
    For two disjoint sets $X,Y$, let
    $X \otimes Y = (X \cup Y, \Set{xy}{x \in X, y \in Y } )$ denote
    the \emphw{bipartite clique} on $X \cup Y$.
\end{definition}

\begin{definition}
    \deflab{filter}%
    Given a graph $\G = (\VV,\EG)$, and a parameter $\prS \in [0,1]$,
    let $\H = (\VV, \EG')$ be a subgraph of $\G$, where an edge
    $e \in \EG$ is included in $\EG'$ (independently) with probability
    $\prS$. Let $\DistrY{\G}{\prS}$ denote the resulting distribution
    over graphs. In particular, let
    $\DistrY{n}{\prS} = \DistrY{K_n}{\prS}$.
\end{definition}

The distribution $\DistrY{n}{\prS}$ is usually denoted by $G(n,\prS)$
in the literature.

\begin{definition}
    A path $\pi = i_1 i_2 \ldots i_k$ is a \emphi{straight path}
    between $i$ and $j$ in $G$, if $\pi$ is a valid path in $\G$,
    $i_1 = i$, $i_k = j$, and $i_1 < i_2 < \cdots< i_k$. It is a
    \emphi{$t$-hop} path, if $k \leq t$.
\end{definition}

Thus $\G$ is an \emphi{exact spanner} if there is a straight path in
it for all pairs of vertices\NotSoCG{ in $\IRX{n}$}.

\begin{definition}
    \deflab{deficiency}%
    For a graph $\G$ over $\IRX{n}$, let $\nFailX{\G}$ be the number
    of pairs $i< j$, such that there is no straight path between $i$
    and $j$ in $\G$. We refer to $\nFailX{\G}$ as the
    \emphi{deficiency} of $\G$.  Given a distribution $\DistrC$ over
    graphs, we use the shorthand
    $\nFailX{\DistrC} = \ExChar_{\G \sim \DistrC}\mleft[ \nFailX{\G}
    \mright]$

    For a parameter $\prS \in (0,1)$ and a number $n$, let
    $\lbY{n}{\prS} = \nFailX{\DistrY{n}{\prS}} $ be the \emphi{optimal
       deficiency}.  For a parameter $k$, a pair $i < j$ is a
    \emphi{$k$-hop failure} if there is no straight path from $i$ to
    $j$ with at most $\kk$ edges (i.e., $\kk$ hops).  Let
    $\lbZ{n}{\prS}{\kk}$ be the expected number of pairs $i<j$ that
    are $k$-hop failures for a graph drawn from $\DistrY{n}{\prS}$.
    The quantity $\lbZ{n}{\prS}{\kk}$ is the \emphi{optimal $k$-hop
       deficiency}.

\end{definition}

The optimal deficiency $\lbY{n}{\prS}$ is a lower bound on the
(expected) number of pairs with no straight path in a graph drawn from
$\DistrY{\G}{\prS}$, where $\G$ is an arbitrary graph over
$\IRX{n}$. The task at hand is to construct a graph $\G$, as sparse as
possible, such that $\nFailX{\DistrY{\G}{\prS}}$ is close to
$\lbY{n}{\prS}$.

\section{On the optimal deficiency \TPDF{$\lbY{n}{\prS}$}{l(n,p )}}
\seclab{opt:deficiency}

Consider the clique graph $K_n$ over $\IRX{n}$, where the weight of an
edge $ij$ is $|i-j|$. Here we investigate the expected number of pairs
(i.e., $\lbY{n}{\prS}$) that do not have a straight path in a graph drawn
from $\DistrY{K_n}{\prS}$.

\subsection{A rough upper bound}
\seclab{u:b:easy}

\begin{lemma}
    \lemlab{fail:edge}%
    For two indices $i < j$, with $\Delta = j - i$, let $\pdX{\Delta}$
    be the probability that there is \emph{no} $2$-hop straight path
    between $i$ and $j$ in $\G \sim \DistrY{K_n}{\prS}$.  We have
    \begin{equation*}
        (1-\prS)^{\Delta}%
        \leq%
        \pdX{\Delta}%
        =%
        (1-\prS) (1-\prS^2)^{\Delta -1}.
    \end{equation*}
\end{lemma}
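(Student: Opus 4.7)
I would prove the lemma by characterizing the event ``no 2-hop straight path between $i$ and $j$'' exactly and then exploiting the independence of the underlying edge samples. A 2-hop straight path from $i$ to $j$ is either the direct edge $ij$, or a two-edge path $i \to k \to j$ through some intermediate index $k$ with $i < k < j$. Hence the failure event is the intersection of the following $\Delta$ conditions: the edge $ij$ is absent in $\G$, and for each of the $\Delta-1$ intermediate indices $k \in \{i+1,\ldots,j-1\}$, at least one of the two edges $ik$, $kj$ is absent.

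The crucial observation is that these $\Delta$ events depend on pairwise disjoint sets of edges. The direct edge $ij$ appears in none of the intermediate events, and for distinct intermediates $k \neq k'$ the two-element edge sets $\{ik, kj\}$ and $\{ik', k'j\}$ share no common edge. By independence of the Bernoulli edge samples in $\DistrY{K_n}{\prS}$, the $\Delta$ events are mutually independent. Multiplying the per-event probabilities, namely $1-\prS$ for the direct edge being missing and $1-\prS^2$ for ``not both $ik$ and $kj$ are present'' at each intermediate $k$, gives $\pdX{\Delta} = (1-\prS)(1-\prS^2)^{\Delta-1}$, which yields the upper bound as an equality. The lower bound then follows immediately from $1-\prS^2 \ge 1-\prS$ for $\prS \in (0,1)$, so the product is at least $(1-\prS)\cdot(1-\prS)^{\Delta-1} = (1-\prS)^{\Delta}$.

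Since the argument reduces to a direct calculation, I do not anticipate any real obstacle; the only genuine care required is verifying the edge-disjointness that justifies the independence step, together with a quick sanity check of the boundary case $\Delta = 1$, where the product over intermediates is empty and both bounds collapse to $\pdX{1} = 1-\prS$, as expected.
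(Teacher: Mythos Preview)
Your argument is correct and in fact slightly sharper than what the paper does. For the upper bound you use the same edge-disjointness observation the paper uses, but you go one step further and note that the failure event is \emph{exactly} the intersection of the $\Delta$ independent sub-events, so that $\pdX{\Delta} = (1-\prS)(1-\prS^2)^{\Delta-1}$ is an equality, not merely an inequality. The paper only states and uses the $\leq$ direction. For the lower bound the paper argues via a sub-event (all $\Delta$ edges from $i$ into $\{i+1,\ldots,j\}$ are absent, which has probability $(1-\prS)^{\Delta}$ and forces failure), whereas you derive it algebraically from your equality using $1-\prS^2 \geq 1-\prS$. Both routes are valid; yours is cleaner once the equality is in hand, while the paper's sub-event argument is self-contained and does not rely on first pinning down $\pdX{\Delta}$ exactly.
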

\begin{proof}
    Let $q = 1-\prS$.  Let $\Event$ be the event that all the $\Delta$
    outgoing edges from $i$ to $i+1, \ldots, j$ are deleted. Let
    $\Event'$ be the symmetric event that all the $\Delta$ incoming
    edges into $j$ are deleted.  We have that $i$ is disconnected from
    $j$ if $\Event$, or even $\Event \cup \Event'$ happens. Observe
    that $\Prob{\Event} = \Prob{\Event'} = (1-\prS)^\Delta$, and
    \begin{align*}
      \pdX{\Delta}
      &\geq
        \Prob{ \Event \cup \Event' \bigr.}
        =%
        \Prob{ \Event} +
        \Prob{ \Event'} - \Prob{\Event \cap \Event'}
        =
        2(1-\prS)^{\Delta}  - (1-\prS)^{2\Delta-1}
      \\&
      =
      (1-\prS)^{\Delta} \pth{2  - (1-\prS)^{\Delta-2} }
      \geq
      (1-\prS)^{\Delta}.
    \end{align*}

    As for the exact probability, let
    $\Pi = \Set{ i t j }{ i < t < j }$ be the collection of
    $\Delta - 1$ $2$-hop straight path between $i$ and $j$ in
    $K_n$. These paths are edge disjoint, and the probability of each
    one of them to fail to be realized in $\G$ is exactly
    $1-\prS^2$. Thus, if there is no straight path from $i$ to $j$ in
    $\G$, then the edge $ij$ must be deleted, and so are all the paths
    of $\Pi$. This readily implies that
    \begin{math}
        \pdX{\Delta}%
        =%
        (1-\prS) (1-\prS^2)^{\Delta - 1},
    \end{math}
    as all these paths are disjoint.
\end{proof}

\begin{lemma}
    \lemlab{l:b:rough}%
    We have
    $\lbY{n}{\prS} = \nFailX{\DistrY{n}{\prS}} \leq \lbZ{n}{\prS}{2} \leq
    n/\prS^2$, see \defref{deficiency}.
\end{lemma}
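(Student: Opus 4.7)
The plan is to prove the two inequalities separately, both by elementary counting.

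For the first inequality $\lbY{n}{\prS} \leq \lbZ{n}{\prS}{2}$, I would just observe that a $2$-hop straight path is in particular a straight path. Hence any pair $i<j$ that fails to have any straight path in $\H \sim \DistrY{n}{\prS}$ is also a $2$-hop failure. Taking expectations over $\H$ gives the inequality directly, with no computation needed.

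For the second inequality $\lbZ{n}{\prS}{2} \leq n/\prS^2$, the plan is to use linearity of expectation together with the upper bound from \lemref{fail:edge}. Writing $\Delta = j-i$ and grouping the pairs by gap, the expected number of $2$-hop failures equals
\begin{equation*}
    \sum_{i<j} \pdX{j-i}
    \;=\;
    \sum_{\Delta=1}^{n-1} (n-\Delta)\, \pdX{\Delta}
    \;\leq\;
    n \sum_{\Delta=1}^{n-1} (1-\prS)\,(1-\prS^2)^{\Delta-1}.
\end{equation*}
The remaining sum is bounded by the geometric series $\sum_{\Delta \geq 1} (1-\prS)(1-\prS^2)^{\Delta-1} = (1-\prS)/\prS^2 \leq 1/\prS^2$, which yields the claimed bound.

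There is no real obstacle here: both inequalities are essentially immediate given \lemref{fail:edge}. The only thing to be a bit careful about is using the correct (upper) bound on $\pdX{\Delta}$ and not the lower bound, and replacing $(n-\Delta)$ by $n$ before summing so that the geometric series can be extended to infinity cleanly.
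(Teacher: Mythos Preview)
Your proposal is correct and follows essentially the same approach as the paper: both arguments invoke the upper bound $\pdX{\Delta} \leq (1-\prS)(1-\prS^2)^{\Delta-1}$ from \lemref{fail:edge} and sum the resulting geometric series, with the first inequality coming from the trivial inclusion of failure events. Your version is slightly more explicit about separating the two inequalities and about the multiplicity $(n-\Delta)$ of each gap, but the underlying computation is identical.
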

\begin{proof}
    By \lemref{fail:edge}, we have that that the expected number of
    indices $j$, such that there is no $2$-hop path from a fixed
    $i <j$ to $j$ is
    \begin{math}
        \leq%
        \sum_{\Delta=1}^n \pdX{\Delta}%
        \leq%
        \sum_{\Delta=1}^n (1-\prS) (1-\prS^2)^{\Delta - 1}%
        \leq%
        \sum_{\Delta=0}^\infty (1-\prS^2)^{\Delta} \leq
        \frac{1}{1-(1-\prS^2)} = \frac{1}{\prS^2}.
    \end{math}
\end{proof}

We prove below a generalization of \lemref{l:b:k} for the optimal
$\kk$-hop deficiency, see \lemref{l:b:k} for details, for any
$\kk > 1$.

\subsection{A tighter upper bound on the deficiency
   \TPDF{$\lbY{n}{\prS}$}{l(n, pr )}}
\seclab{u:b:optimal}

In the following, $n$ and $\prS$ are parameters, and $\G$ is a graph
sampled from $\DistrY{n}{\prS}$.

The above analysis suggests that for two vertices to be connected (by
a straight path), with probability $\geq 1-\prS^{O(1)}$, in $\G$,
requires that their distance $\Delta$ has to be at least
$c' \prS^{-2} \log \prS^{-1}$, for some constant $c'$.  The lower
bound of \lemref{fail:edge}, on the other hand, implies that $\Delta$
must be at least $c' \prS^{-1} \log \prS^{-1}$.  It turns out that the
truth is closer to the lower bound, but proving it requires some work.

To this end, let
\label{nonsense}
\begin{align}
  \sB = \frac{c}{\prS}
  \Eqlab{c:value}%
\end{align}
(for simplicity, assume $\sB$ is an integer), where $c > 0$ is a
sufficiently large integer constant. We divide the vertices $\IRX{n}$
into $n / \sB$ blocks\footnote{Again, for simplicity of exposition, we
   assume $\sB$ divides $n$.}, each of size $\sB$, where the $i$\th
\emphi{block} is $B_i = \IRY{ (i-1) \sB +1}{ i \sB }$.

A vertex $i$ is \emphi{reachable} if there is a straight path from $1$
to $i$ in $\G$.  Let $\Rch = \RchX{\G}$ be the set of all reachable
vertices of $\G$. For a block $B$, let
$\goodX{B} = \cardin{ B \cap \Rch}$.

The following lemma testifies that (with good probability) the number
of reachable vertices grows exponentially between blocks, until it
reaches a constant fraction of the size of a block, and then it
remains stable.

The following lemma is implied by a standard application of Chernoff's
inequality.

\begin{lemma}\RefProofInAppendix{success}
    \lemlab{success}%
    Consider two blocks $B$ and $B'$, where $B$ appears before $B'$,
    and $\gC = \goodX{B} > 0$. We have:
    \begin{compactenumA}
        \smallskip%
        \item
        $\Prob{\bigl. \goodX{B'} \geq \min( 2\gC, \sB/3 ) } \geq 1/2$.

        \smallskip%
        \item If $\gC \geq \sB/3$, then
        $\Prob{\bigl. \goodX{B'} \geq (2/3)\sB } \geq 1/2$.
    \end{compactenumA}
\end{lemma}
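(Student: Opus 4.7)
The plan is to expose the independence between the edges that determine $R := B \cap \Rch$ and the edges crossing from $B$ into $B'$. Since $R$ is fully determined by edges of $\G$ with both endpoints of index at most $\max B$, while every edge between $B$ and $B'$ has one endpoint of index strictly larger than $\max B$, these two edge sets are disjoint, hence independent. So I may condition on any realization of $R$ with $|R|=\gC$ and treat the crossing edges as fresh independent $\mathrm{Bernoulli}(\prS)$ coins.

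For each $u\in B'$, let $X_u$ be the indicator that at least one edge $ru$ with $r\in R$ is present in $\G$. If $X_u=1$, concatenating any straight path from $1$ to $r$ (which exists since $r\in\Rch$) with the edge $ru$ yields a straight path from $1$ to $u$ (the indices strictly increase because $r\le\max B<u$), so $u\in\Rch$. Therefore $\goodX{B'}\ge Y:=\sum_{u\in B'}X_u$, and by disjointness of the edge sets $\{ru:r\in R\}$ across different $u$, the indicators are conditionally independent $\mathrm{Bernoulli}(q)$ with $q=1-(1-\prS)^\gC$, so $Y\sim\mathrm{Bin}(\sB,q)$.

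The claim now reduces to a Binomial tail estimate. Using $(1-\prS)^\gC\le e^{-\prS\gC}$ together with $\prS\sB=c$, I would split (A) into two regimes. When $\prS\gC\le 1$ (which forces $\gC\le\sB/c\le\sB/6$ for $c\ge 6$, so the target threshold is $2\gC$), the inequality $1-e^{-x}\ge x/2$ on $[0,1]$ yields $\Ex{Y}\ge c\gC/2$, which dominates $2\gC$ once $c$ is a sufficiently large constant. When $\prS\gC>1$, $\Ex{Y}\ge(1-e^{-1})\sB>0.6\,\sB$, comfortably above the maximum possible threshold $\min(2\gC,\sB/3)\le\sB/3$. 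For (B), $\gC\ge\sB/3$ yields $\Ex{Y}\ge(1-e^{-c/3})\sB$, which can be pushed arbitrarily close to $\sB$ by enlarging $c$, and in particular beats $2\sB/3$ with multiplicative slack.

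The main obstacle is converting these expectation lower bounds into the ``probability at least $1/2$'' statement, since each threshold sits only a constant factor below $\Ex{Y}$. I would use either the classical fact that the median of $\mathrm{Bin}(n,q)$ lies within $1$ of its mean, so $\Prob{Y\ge\lfloor\Ex{Y}\rfloor}\ge 1/2$, combined with choosing $c$ large enough that $\Ex{Y}$ exceeds each threshold by at least one; or a lower-tail Chernoff bound on $Y$, which gives probability much closer to~$1$ and absorbs any integrality slack trivially. Either way, the constant $c$ in $\sB=c/\prS$ is fixed once and for all to make the slack work in every case.
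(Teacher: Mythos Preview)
Your proposal is correct and follows essentially the same route as the paper: condition on the realized reachable set $R\subseteq B$, observe that the $B$-to-$B'$ edges are fresh independent coins, lower-bound $\goodX{B'}$ by a $\mathrm{Bin}(\sB,1-(1-\prS)^\gC)$ variable, split into the regimes $\gC\lessgtr 1/\prS$, and finish with Chernoff. The only cosmetic difference is that the paper commits to Chernoff throughout rather than offering the binomial-median fact as an alternative.
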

\begin{proof:e}{\Xlemref{success}}{success}
    (A) Let $S = B \cap \Rch$, and observe that $\gC = \cardin{S}$.
    Consider the set of (directed) edges $E' =S \times B'$, where
    $B' = \IRY{t}{t + \sB -1 }$. For $i=1,\ldots, \sB$, let $X_i=1$
    $\iff$ there is an incoming edge of $E'$ into the vertex
    $t + i -1$ in the graph $\G$ (this implies that
    $t+i-1 \in B' \cap \Rch$).  We have that
    \begin{math}
        \gamma%
        =%
        \Prob{X_i = 1}%
        =%
        1- (1-\prS)^\gC.
    \end{math}
    Observe that $X_1, \ldots, X_\sB$ are independent, and their sum
    $Y = \sum_i X_i$ is concentrated, implied by Chernoff's
    inequality, as we show next (i.e.,, the remainder of the proof is
    by now standard tedium, and the reader might want to skip it).

    If $\gC \geq 1/\prS$, then
    \begin{math}
        \gamma%
        =%
        1- (1-\prS)^\gC
        \geq
        1 - \exp\pth{ -p \gC }
        \geq
        1 - \frac{1}{e}
        \geq %
        \frac{1}{2}.
    \end{math}
    Otherwise, if $\gC < 1/\prS$, then
    $(1-\prS)^{\gC-1} \geq (1-1/\gC)^{\gC-1} \geq 1/e$, and
    \begin{equation*}
        \gamma
        =%
        1- (1-\prS)^\gC%
        =%
        \prS \pth{ 1 + (1-\prS) + \cdots + (1-\prS)^{\gC-1} }%
        \geq%
        \prS \frac{u}{ e},
    \end{equation*}
    If $\gC \geq 1/\prS$, then
    \begin{math}
        \mu= \Ex{Y}%
        =%
        \sB \gamma%
        \geq%
        \sB /2,
    \end{math}
    and \thmrefY{chernoff}{Chernoff's inequality} implies that
    \begin{equation*}
        \Prob{Y \leq \sB/3}%
        \leq
        \Prob{ Y \leq (1-1/3) \mu}
        \leq
        \exp\bigl( -(1/3)^2\mu/2 \bigr)
        \leq%
        \exp \Bigl( -\frac{\sB}{36}\Bigr)
        =%
        \exp\Bigl( -\frac{c}{36p}\Bigr)
        \ll
        \frac{1}{2},
    \end{equation*}
    for $c > 36$, as $\sB = c/\prS$.  If $\gC < 1/\prS$, then
    $\mu = M \gamma \geq (c/\prS) \prS (\gC/e) = (c/e) \gC > 9 \gC$,
    for $c > 36$. As such, by \thmrefY{chernoff}{Chernoff's
       inequality}, we have
    \begin{equation*}
        \Prob{Y \leq 2u}%
        \leq
        \Prob{Y \leq (1-7/9) \mu }
        \leq%
        \exp\pth{ -(7/9)^2 \mu /2}
        \leq%
        \exp\pth{ -  \mu /4}
        \leq%
        \exp\pth{ -  2\gC}
        \leq
        \frac{1}{2},
    \end{equation*}
    since $\gC \geq 1$. As $\goodX{B'} \geq Y$, this completes the
    proof of this part.

    \bigskip%
    \noindent%
    (B) If $\gC \geq M/3 = c / (3\prS)$, then
    \begin{math}
        \gamma%
        \geq
        1 - \exp\pth{ -p \gC }
        =%
        1 - \exp\pth{ -c/3 }
        \geq
        1 - \exp\pth{ -12 }
        \geq %
        0.99999.
    \end{math}
    Thus, $\mu \geq 0.9999 \sB$, and by \thmrefY{chernoff}{Chernoff's
       inequality}, we have
    \begin{equation*}
        \Prob{\smash{Y \leq \frac{2}{3} \sB}\Bigr.}%
        \leq
        \Prob{\smash{ Y \leq \Bigl(1-\frac{1}{4}\Bigr) \mu } \Bigr.}
        \leq
        \exp\Bigl( -\frac{1}{4^2} \cdot \frac{\mu}{2} \Bigr)
        \leq%
        \exp \Bigl( -\frac{\sB}{33}\Bigr)
        =%
        \exp\Bigl( -\frac{c}{33p}\Bigr)
        \ll
        \frac{1}{2},
        \SoCG{\qedhere}
    \end{equation*}
\end{proof:e}

For $i \in \IRX{n/\sB}$, let $\gX{i} = \goodX{B_i}$, and let
$\prevX{i} = \arg\max_{j < i} \gX{j}$ be the index of the block before
$B_i$ with maximum reachable vertices.

\begin{definition}
    A block $B_i$ is \emphi{successful} if one of the following holds:
    \begin{compactenumI}
        \item $i =1$,
        \item $\gX{i} \geq \sB/3$, or
        \item for $j = \prevX{i}$, we have $\gX{i} \geq 2 \gX{j}$.
    \end{compactenumI}
\end{definition}

\begin{remark}
    \remlab{fix}%
    By \lemref{success}, a block $B_i$, for $i > 1$, has probability
    at least half to be successful. Intuitively, whether or not two
    blocks are successful is almost an independent event, and in
    particular the total number of successful blocks is strongly
    concentrated. Here are two ways to prove this formally:

    \begin{compactenumA}
        \item \textbf{Azuma's inequality}: Let $X_i$ be an indicator
        variable that is $1$ if $B_i$ is successful. Reinterpreting
        \lemref{success}, we have that
        $\rho_i = \ExCond{X_i}{X_1, \ldots, X_{i-1}} =
        \ProbCond{X_i}{X_1, \ldots, X_{i-1}} \geq 1/2$. For a fixed
        $t$, let $Z = \sum_{i=1}^t X_i$. For $i=0,\ldots, t$, let
        $Y_i= \ExCond{Z}{X_1, \ldots, X_i}$. It is well known that the
        sequence of $Y_i$s form a martingale. Furthermore, we have
        that $\cardin{Y_i - Y_{i-1}} \leq 1$ (indeed, when the value
        of $X_i$ is exposed, the expectation for the total sum of the
        $X_i$s can change by at most $1$). Thus, for any
        $\Delta \geq 0$, by Azuma's inequality, we have
        \begin{equation*}
            \Prob{\bigl. Z < \Ex{Z} - \Delta}
            =%
            \Prob{\bigl. \Ex{Z} - Z > \Delta}
            =%
            \Prob{Y_0 - Y_t > \Delta}%
            \leq
            \exp \pth{ -\Delta^2 /(2t)}.
        \end{equation*}

        \medskip%
        \item \textbf{Coupling with an independent sequence.} A more
        elegant argument (but somewhat longer) is via coupling. We
        determine the values of the $X_i$s one by one. Assume we are
        in the $i$\th step of this process, where
        $X_1, \ldots, X_{i-1}$ were already determined, and $X_i$ is
        about to be fixed (i.e., the relevant edges in the graph
        connecting $B_i$ to earlier blocks, and the edges inside $B_i$
        are being exposed). Let
        $p_i =\ProbCond{X_i}{X_1, \ldots, X_{i-1}}$. By
        \lemref{success}, we have $p_i \geq 1/2$. Thus, if $X_i$ is
        set to be $0$, then we set $G_i=0$. Otherwise, if $X_i=1$, we
        set $G_i=1$ (independently) with probability $1/(2p_i)$, and
        otherwise set $G_i=0$.  Observe that
        $\Prob{G_i=1} = \ProbCond{X_i=1}{X_1, \ldots, X_{i-1}}
        \tfrac{1}{2p_i} = p_i \cdot \tfrac{1}{2p_i} = \tfrac{1}{2}$.
        Clearly, the sequence $G_1, \ldots, G_t$, has the distribution
        of $t$ independent unbiased coin tosses, and this coupling
        argument implies that
        $\sum_{i=1}^t X_i \geq \sum_{i=1}^t G_i $.
    \end{compactenumA}
\end{remark}

\begin{lemma}
    \lemlab{disconnect}%
    Let $c_5 > 1 $ be some integer constant.  Consider two vertices
    $i$ and $j$, such that $j > i + t c_5 \sB$, and
    $t > 1 + \ceil{ \log_2 \sB} = \Theta( \log\tfrac{1}{\prS})$ is an
    integer. The probability that there is no straight path from $i$
    to $j$ in $\G$ is at most $\exp(-t)$, for $c_5$ sufficiently
    large.
\end{lemma}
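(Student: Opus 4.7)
The plan is to partition the integer interval $\IRY{i+1}{i+tc_5\sB}$ into $tc_5$ consecutive blocks $B_1,\ldots,B_{tc_5}$ of size $\sB$, show that the reachable set from $i$ doubles across successive successful blocks until it saturates at roughly $\sB/3$ reachable vertices per block, and then use the untouched edges from this large reservoir of vertices to connect to $j$. By edge-independence of $\DistrY{n}{\prS}$, the restriction of $\G$ to $\IRY{i}{j}$ is distributed as a sample from $\DistrY{j-i+1}{\prS}$, so after shifting indices I may assume $i=1$; then $\{B_k\}_{k=1}^{tc_5}$ all lie strictly below $j$. Let $g_k=\goodX{B_k}$, $M_k=\max_{k'\leq k} g_{k'}$ with the convention $M_0=1$, and let $Z_k$ indicate the event $g_k \ge \min(2M_{k-1},\sB/3)$, which by inspection coincides with ``$B_k$ is successful''.

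The key reduction is that $(Z_k)$ stochastically dominates an independent Bernoulli$(1/2)$ sequence. Applying \lemref{success}(A) with $B' = B_k$ and with $B = B_{\prevX{k}}$ (for $k=1$, using the singleton pseudo-block $\{1\}$, which carries one reachable vertex) gives $\Prob{Z_k = 1 \mid \mathrm{past}} \ge 1/2$, where the conditioning is on the pattern of reachability inside the earlier blocks. The randomness invoked in this bound consists only of the edges from $B_{\prevX{k}}$ to $B_k$; since $B_k \cap B_{k'} = \emptyset$ for $k \ne k'$ and none of these edges touches $j$, the edge sets consumed across different values of $k$ are pairwise disjoint, and are also disjoint from the edges incident to $j$. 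Unwinding the definition of $Z_k$: each $Z_k=1$ doubles $M_{k-1}$ up to the cap $\sB/3$, so after at most $\ceil{\log_2 \sB}+1 \le t$ occurrences of $Z=1$ one reaches saturation $M \ge \sB/3$, and from that point on each further $Z_k=1$ certifies a saturated block with $g_k \ge \sB/3$.

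Split the $tc_5$ blocks into two halves of size $tc_5/2$ and apply Chernoff twice. For $c_5$ sufficiently large, the first half contains $\ge t$ successes (attaining saturation) except with probability $\le \tfrac{1}{3}\exp(-t)$, and the second half contains $\ge tc_5/8$ successes except with probability $\le \tfrac{1}{3}\exp(-t)$. On the intersection of these events, the second half harbours $\ge tc_5/8$ saturated blocks, yielding $\ge (tc_5/8)(\sB/3) = \Omega(t/\prS)$ distinct reachable vertices strictly below $j$. The edges from these vertices to $j$ are untouched by the $Z_k$-analysis (whose edges live entirely strictly below $j$), so conditionally on everything above they are still i.i.d.\ Bernoulli$(\prS)$; the chance that none of them survives is at most $(1-\prS)^{\Omega(t/\prS)} \le \exp(-\Omega(t))$, which is $\le \tfrac{1}{3}\exp(-t)$ once $c_5$ is large. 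A union bound over the three failure events closes the proof.

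The main technical hurdle will be cleanly formalising the conditional independence of the $Z_k$'s and the freshness of the final connecting edges to $j$: one must verify that the randomness invoked in \lemref{success}(A) at step $k$ is confined to the bipartite edge set $B_{\prevX{k}} \times B_k$, so that the edge sets used at distinct steps of the argument (and in the final connection to $j$) are mutually disjoint and hence jointly independent conditional on the past reachability pattern.
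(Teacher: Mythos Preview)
Your proposal is correct and follows essentially the same approach as the paper: both arguments use \lemref{success} to show that successful blocks double the reachable count until saturation at $\sB/3$, apply Chernoff (via stochastic domination by i.i.d.\ Bernoulli$(1/2)$) to guarantee enough successful blocks, and then use the fresh edges from the resulting $\Omega(t/\prS)$ reachable vertices to $j$. The only cosmetic differences are that the paper does not split into two halves---it applies a single Chernoff bound to get $\geq 2t$ successful blocks among all $c_5 t - 1$ of them, the first $t$ of which reach saturation and the remaining $\geq t$ of which are then heavy---and the paper is terser about the conditional-independence bookkeeping that you (rightly) flag as the main technical point to pin down.
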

\begin{proof}
    We might as well assume that $i=1$. There are at least
    \begin{equation*}
        \nu = c_5 t-1 \geq (c_5/2)\ceil{\log_2 (1/\prS)}
    \end{equation*}
    blocks between $1$ and $j$ in $\G$: $B_2, B_3, \ldots,
    B_{\nu+1}$. Let $X_i=1$ if the block $B_{i+1}$ is successful, and
    $X_i=0$ otherwise.  As suggested by \remref{fix} (B), we work with
    the coupled independent random variables $G_1, \ldots, G_\nu$
    instead of $X_1, \ldots, X_n$, as $\sum_i X_i \geq \sum_i
    G_i$. Here, the $G_i$s are independent and $\Prob{G_i=1} =1/2$.
    For $Z = \sum_i X_i$ and $Y = \sum_i G_i$, we have
    $\Ex{Z} \geq \Ex{Y} = \nu/2$. By Chernoff's inequality, we have
    \begin{align*}
      \beta_1%
      &=%
        \Prob{ Z \leq 2t}
        \leq
        \Prob{ Y \leq 2t}
        \leq%
        \Prob{ Y \leq \frac{\nu}{2} \cdot \frac{4t}{\nu}
        }
        =%
        \Prob{ Y \leq \Bigl(1-\frac{\nu - 4t}{\nu}\Bigr)\frac{\nu}{2}}
      \\&%
      \leq%
      \exp\Bigl( - \Bigl(\frac{\nu - 4t}{\nu}\Bigr) ^2 \frac{\nu}{4} \Bigr)
      \leq%
      \exp\Bigl( - \frac{\nu}{8} \Bigr)
      \leq
      \exp\pth{ -t-1},
    \end{align*}
    by picking $c_5 \geq 16$.  Thus, with good probability, $Y > 2t$
    and there are ``many'' successful blocks.  Let
    $i_1, i_2, \ldots, i_Y$ be the indices of these successful
    blocks. In particular, for $j=1,2,\ldots$,
    $\goodX{i_{j+1}}\geq 2\goodX{i_{j}}$, till
    $\goodX{i_j} \geq \sB/3$. Thus, for all
    $j \geq t \geq 1+ \log_2 \sB$, we have that $\gX{i_j} \geq
    \sB/3$. Namely, there are at least $t$ ``heavy'' blocks with at
    least $\sB/3$ reachable vertices, in each one of them, all of them
    appearing before $j$.

    The probability that all the edges, from one of these reachable
    vertices of a heavy block to $j$, fail to be selected in $\G$
    (which was sampled from $\DistrY{n}{\prS}$), is at most
    $(1-\prS)^{\sB/3}$. The probability that all the heavy blocks
    fail in this way is thus at most
    \begin{equation*}
        \beta_2
        =
        \pth{ (1-\prS)^{\sB/3} }^{t}
        \leq
        \exp\pth{ -\frac{\prS \sB}{3} \cdot t}
        =%
        \exp\pth{ - \frac{c t}{3}}
        \leq%
        \exp(-t-1).
    \end{equation*}
    Thus, the probability that $j$ is not reachable is at most
    $\beta_1 + \beta_2 \leq e^{-t}$.
\end{proof}

\begin{lemma}
    We have
    $\lbY{n}{\prS} = \nFailX{\DistrY{n}{\prS}} = O\bigl( (n/\prS) \log
    (1/\prS) \bigr)$, see \defref{deficiency}.
\end{lemma}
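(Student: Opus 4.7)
The plan is to apply \lemref{disconnect} in a charging argument: for each fixed source vertex $i$, bound the expected number of destinations $j > i$ that fail to have a straight path to $i$ in $\G \sim \DistrY{n}{\prS}$, and then sum over $i$. Recall that $\sB = c/\prS$ and that \lemref{disconnect} gives $\exp(-t)$ failure probability once $j - i > t c_5 \sB$ and $t$ exceeds the threshold $t_0 = 1 + \ceil{\log_2 \sB} = \Theta(\log(1/\prS))$.

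I split the destinations $j > i$ into two regimes. In the \emph{short} regime, $j - i \leq t_0 c_5 \sB$, I do not even use the probabilistic bound; I simply count candidates, which contributes at most $t_0 c_5 \sB = O\pth{(1/\prS) \log(1/\prS)}$ potential failed pairs. In the \emph{long} regime, I bucket by integers $t \geq t_0$: the set of $j$ with $t c_5 \sB < j - i \leq (t+1) c_5 \sB$ has size $c_5 \sB$, and by \lemref{disconnect} each such $j$ fails with probability at most $\exp(-t)$. By linearity of expectation, the contribution of the long regime to the expected failure count from source $i$ is at most
\begin{equation*}
    \sum_{t \geq t_0} c_5 \sB \cdot e^{-t}
    \;=\; O\pth{\sB e^{-t_0}}
    \;=\; O(\sB)
    \;=\; O(1/\prS),
\end{equation*}
since the geometric series sums to a constant times its leading term.

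Adding the two regimes, the expected number of destinations $j > i$ with no straight path from $i$ is $O\pth{(1/\prS)\log(1/\prS)}$. Summing over all $n$ sources and invoking \defref{deficiency} gives
\begin{equation*}
    \lbY{n}{\prS}
    \;=\; \nFailX{\DistrY{n}{\prS}}
    \;=\; O\pth{\tfrac{n}{\prS}\log\tfrac{1}{\prS}},
\end{equation*}
as claimed. There is no real obstacle here; \lemref{disconnect} is doing all the work, and the only care needed is in choosing the threshold $t_0$ correctly so that the geometric tail beyond $t_0$ sums to $O(\sB)$ rather than blowing up, and in handling the small-$t$ portion by a trivial counting bound.
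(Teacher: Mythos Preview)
Your proposal is correct and follows essentially the same argument as the paper: fix a source $i$, trivially count the first $O(t_0\sB)=O((1/\prS)\log(1/\prS))$ destinations, apply \lemref{disconnect} to bucket the remaining destinations and sum the geometric tail $\sum_t c_5\sB\,e^{-t}=O(\sB)$, then sum over all $i$. The only cosmetic difference is notation for the threshold ($t_0$ versus the paper's $\nu$) and the bucket width constant; there is nothing to add.
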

\begin{proof}
    Fix a vertex $i$, and let $c_5$ be the constant from
    \lemref{disconnect}.  Let
    $\nu = 1 + 2\ceil{\ln (c_5 \sB)} = \Theta( \log \tfrac{1}{\prS} )$
    and $T = \ceil{ c_5 \sB} = \ceil{ c_5 c /\prS}$, we
    (conservatively) count all the vertices in the range
    $\IRY{i+1}{ i + \nu T}$ as not being reachable from $i$ in $\G$.
    As for bounding the expected number of unreachable vertices
    further away, we apply \lemref{disconnect} to the super blocks of
    size $T$, where $i > \nu$, which implies the upper bound
    \begin{equation*}
        \sum_{t =  \nu+1}^{\infty}  \exp( -t ) T %
        =
        c_5 \sB\sum_{t = \nu+1}^{\infty}  \exp( -t) %
        \leq
        \frac{c_5 \sB}{(c_5 \sB)^2}
        \leq%
        1.
    \end{equation*}
    Namely, in expectation, $i$ can not reach (via a straight path) at
    most $T + 1 = O\bigl( \tfrac{1}{\prS} \log \tfrac{1}{\prS} \bigr)$
    vertices. Adding this bound over all $i$ implies the claim.
\end{proof}

\subsection{The lower bound}

\begin{lemma}
    \lemlab{l:b:clique}%
    We have
    $\lbY{n}{\prS} = \nFailX{\DistrY{n}{\prS}} = \Omega( (n/\prS) \log
    (1/\prS) )$, see \defref{deficiency}.
\end{lemma}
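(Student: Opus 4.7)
The plan is a direct path-counting union bound. For each source $v \in \IRX{n}$, I will lower bound the expected number of $j \in (v, v+m]$ for which no straight path from $v$ to $j$ exists in $\G \sim \DistrY{n}{\prS}$, using a window of size $m = \Theta((1/\prS)\log(1/\prS))$.

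The key observation is that the number of candidate straight paths from $v$ to $v+\Delta$ in $K_n$ is exactly $2^{\Delta-1}$, indexed by the subsets of intermediate vertices $S \subseteq \{v+1,\ldots, v+\Delta-1\}$: a path with $|S| = k$ intermediate vertices has $k+1$ edges and hence survives in $\G$ with probability $\prS^{k+1}$. The union bound therefore gives
\begin{equation*}
    \Prob{\dGZ{\G}{v}{v+\Delta} = \Delta}
    \;\leq\;
    \sum_{k=0}^{\Delta-1} \binom{\Delta-1}{k}\prS^{k+1}
    \;=\;
    \prS(1+\prS)^{\Delta-1}.
\end{equation*}
Summing over $\Delta = 1, \ldots, m$, the expected number of vertices in $(v, v+m]$ connected to $v$ by a straight path is at most $(1+\prS)^m - 1 \leq e^{\prS m} - 1$.

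Setting $m = \floor{(1/\prS)\ln(1/(3\prS))}$ (which is positive for $\prS$ bounded away from $1$) makes this upper bound at most $1/(3\prS)$, so the expected number of failed pairs with left endpoint $v$ inside the window is at least $m - 1/(3\prS) = \Omega((1/\prS)\log(1/\prS))$. Summing over $v \in [1, n-m]$ then yields $\lbY{n}{\prS} \geq (n-m)\cdot\Omega((1/\prS)\log(1/\prS)) = \Omega((n/\prS)\log(1/\prS))$, under the harmless assumption that $n$ is at least a constant multiple of $m$ (for smaller $n$ the claimed bound would exceed the trivial upper bound $\binom{n}{2}$ on the number of pairs and is thus vacuous). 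I foresee no real obstacle: the only nontrivial ingredient is the path-counting union bound, which matches the authors' remark that a ``surprisingly simple argument'' suffices; the rest is mechanical.
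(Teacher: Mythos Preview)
Your proposal is correct and takes essentially the same approach as the paper: both arguments bound the expected number of reachable targets in a window of length $\Theta\bigl(\tfrac{1}{\prS}\log\tfrac{1}{\prS}\bigr)$ by the expected number of straight paths, using $\sum_k \binom{\cdot}{k}\prS^k = (1+\prS)^{\cdot} \leq e^{\prS \cdot}$, and then sum over sources. The only cosmetic difference is that you compute the per-target bound $\prS(1+\prS)^{\Delta-1}$ first and then sum, whereas the paper directly sums over all paths in the window to get $(1+\prS)^t \leq 1/\prS \leq t/2$; these are two ways of writing the same calculation.
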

\begin{proof}
    Assume $\prS =1/u$ where $u$ is some integer $\geq 8 \geq e^2$.
    The expected number of vertices of $B_i$ that are reachable by a
    straight-path from $1$ in the range $\IRX{t}$, for
    $t=\tfrac{1}{\prS} \ln \tfrac{1}{\prS} \geq \tfrac{2}{\prS}$ can
    be bounded by the expected number of straight paths from $1$ to
    any number in $\IRX{t}$. The key observation is that there are
    $\binom{t-1}{k}$ such paths with $k$ hops, and each such path has
    probability exactly $\prS^k$ to be in the random graph.  Indeed,
    such a $k$-hop path starting at $1$, involves choosing $k$ indices
    $1 < i_1 < i_2 < \ldots < i_k$ all in the range
    $\IRY{2}{t} \subseteq \IRX{t}$, where $i_k$ is the destination of
    the path.

    Thus, the expected number of reachable vertices in $\IRX{t}$ from
    $1$ is bounded by
    \begin{equation*}
        \sum_{k=0}^{t} \binom{t}{k} \prS^k
        =%
        (1+\prS)^{t}
        \leq
        \exp( \prS t)
        \leq
        \exp\Bigl( \ln \frac{1}{\prS} \Bigr)
        =
        \frac{1}{\prS}
        \leq
        \frac{t}{2}.
    \end{equation*}
    Namely, at least half the vertices in the range $\IRX{t}$ are not
    reachable from $1$ in expectation. By linearity of expectations,
    this implies that (in expectation), at least
    $\Omega\bigl( (n/2) (t/2) \bigr) = \Omega( \tfrac{n}{\prS} \ln
    \tfrac{n}{\prS} )$ pairs in $\IRX{n}$ are not reachable to each
    other via a straight path.
\end{proof}

\subsection{The result}

Putting the above together, we get the following result.

\begin{theorem}
    \thmlab{clique}%
    Let $n>1$ and $\prS \in (0,1)$ be parameters.  We have
    $\lbY{n}{\prS} = \nFailX{\DistrY{n}{\prS}} = \Theta( (n/\prS) \log
    (1/\prS) )$, see \defref{deficiency}.
\end{theorem}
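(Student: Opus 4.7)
The plan is to simply combine the two matching bounds that were already established in this section. The theorem is a packaging statement, so the proof will be a one-line assembly rather than any new argument.

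First, I would invoke the upper bound lemma from \secref{u:b:optimal} (the unnamed lemma immediately preceding \lemref{l:b:clique}), which gives $\lbY{n}{\prS} = O((n/\prS)\log(1/\prS))$. That lemma was obtained by partitioning $\IRX{n}$ into blocks of size $\sB = c/\prS$, showing via \lemref{success} that the reachable-set size doubles between successive blocks with probability at least $1/2$, and then bounding the expected number of unreachable vertices from a fixed $i$ by the tail sum $\sum_{t>\nu}e^{-t}\sB$ from \lemref{disconnect}, with $\nu = 1+\ceil{\log(c/\prS)}$.

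Second, I would invoke \lemref{l:b:clique}, which supplies the matching lower bound $\lbY{n}{\prS} = \Omega((n/\prS)\log(1/\prS))$ by counting, for each source $i$, the expected number of $k$-hop straight paths of length at most $t = (1/\prS)\ln(1/\prS)$ and showing via the binomial bound $\sum_k \binom{t}{k}\prS^k \leq e^{\prS t} \leq 1/\prS \leq t/2$ that at least half of the vertices in any length-$t$ window are unreachable in expectation.

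Combining these two bounds yields $\lbY{n}{\prS} = \Theta((n/\prS)\log(1/\prS))$, which is exactly the claim. There is no real obstacle here since both directions are already in hand; the only minor issue is a cosmetic one about the ranges of $n$ and $\prS$ for which the hidden constants are meaningful (e.g., the lower-bound lemma was stated under the assumption $\prS = 1/u$ for an integer $u \geq 8$), but a routine rescaling of $\prS$ by a constant factor extends the bound to all $\prS \in (0,1)$ and all $n > 1$ at the cost of adjusting the constants absorbed by the $\Theta$.
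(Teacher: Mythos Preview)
Your proposal is correct and matches the paper exactly: the theorem is purely a packaging statement, and the paper simply writes ``Putting the above together, we get the following result'' with no separate proof environment. Your recap of the two ingredient lemmas is accurate (if more detailed than necessary), and your remark about extending \lemref{l:b:clique} beyond the assumption $\prS = 1/u$ with $u \geq 8$ is a fair cosmetic point that the paper glosses over.
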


\section{Constructing exact spanner in \TPDF{$1$}{1}-dim}
\seclab{1:dim}%

Our purpose is to build an exact spanner on $\IRX{n}$, such that its
number of edges is near linear, and its deficiency (i.e., the expected
number of failed pairs) is close to the optimal deficiency (i.e.,
\thmref{clique}). It is useful to start with a lower bound.

\subsection{A lower bound on the number of failed pairs in a graph}
\seclab{l:b:graph}

\begin{lemma}
    Let $\G$ be a graph with $n$ vertices, and
    $m \leq ({n}/{8})\log_{1/(1-\prS)}n = \Theta( \tfrac{n}{\prS} \log
    n)$ edges.  Let $\H$ be a graph randomly sampled from
    $\DistrY{\G}{\prS}$.  Then $\Ex{\nFailX{\H}}\geq n^{3/2}/8$.
\end{lemma}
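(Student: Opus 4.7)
The plan is to lower bound the failure probability of each pair using only the out-degrees in $\G$, and then apply Jensen's inequality. For each vertex $i$, write $d_i^+$ for the number of edges $(i,k) \in \EGX{\G}$ with $k > i$; since every edge has a unique smaller endpoint, these satisfy $\sum_i d_i^+ = m$. Any straight path from $i$ to $j$ in $\H$ must begin with an edge $(i,k)$ having $i < k \le j$, so if all such out-edges of $i$ fail in $\H$ (an event of probability at least $(1-\prS)^{d_i^+}$, since at most $d_i^+$ such edges exist in $\G$), the pair $(i,j)$ fails. This yields the pointwise bound $\Prob{(i,j) \text{ fails in } \H} \ge (1-\prS)^{d_i^+}$ for every $i < j$, depending only on the starting vertex.

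Summing over pairs and retaining only the terms with $i \le n/2$ (for which the weight $n - i$ is at least $n/2$) yields
\[
\Ex{\nFailX{\H}} \;\ge\; \sum_{i=1}^{n-1}(n-i)(1-\prS)^{d_i^+} \;\ge\; \frac{n}{2}\sum_{i=1}^{\lfloor n/2 \rfloor}(1-\prS)^{d_i^+}.
\]
Since $(1-\prS)^x$ is convex, Jensen's inequality applied to the $\lfloor n/2 \rfloor$ summands, together with the observation $\sum_{i \le n/2} d_i^+ \le m$, gives
\[
\sum_{i=1}^{\lfloor n/2 \rfloor}(1-\prS)^{d_i^+} \;\ge\; \frac{n}{2}\cdot (1-\prS)^{2m/n}.
\]
The hypothesis on $m$ rearranges to $2m/n \le \tfrac{1}{4}\log_{1/(1-\prS)} n$, and therefore $(1-\prS)^{2m/n} \ge n^{-1/4}$. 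Assembling the pieces delivers $\Ex{\nFailX{\H}} \ge (n/2) \cdot (n/2) \cdot n^{-1/4} = n^{7/4}/4$, which comfortably exceeds the claimed $n^{3/2}/8$.

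The hard part is really just choosing the right event to bound: tracking the full structure of straight paths would be messy, so I instead discard both $j$'s in-edges and every multi-hop alternative, lower bounding failure by the trivial event that every out-edge of $i$ is missing from $\H$. The resulting polynomial slack (the argument produces $n^{7/4}$ rather than $n^{3/2}$) is reassurance that this one-sided estimate is not throwing away too much, and the remainder is routine convexity.
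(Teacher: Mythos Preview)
Your proof is correct and, in fact, yields a stronger bound than the paper's ($n^{7/4}/4$ versus $n^{3/2}/8$). The only quibble is that for odd $n$ the passage from $\lfloor n/2\rfloor$ summands to the exponent $2m/n$ is not literally exact, but the slack easily absorbs this.

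The paper takes a different route. It works with the \emph{full} degree rather than the out-degree: since the average degree is $d = 2m/n \le \tfrac14\log_{1/(1-\prS)} n$, Markov's inequality gives at least $n/2$ vertices of degree $\le 2d$, and each of those becomes completely isolated in $\H$ with probability at least $(1-\prS)^{2d} \ge n^{-1/2}$. The expected number of isolated vertices is then $\ge \sqrt{n}/2$, and each contributes $n-1$ failed pairs (with a factor of $2$ to guard against double-counting), giving $n^{3/2}/8$. Your argument replaces Markov by Jensen and full degree by out-degree; the latter choice is what buys you the extra quarter in the exponent, since $\sum_i d_i^+ = m$ rather than $2m$. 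Your approach also sidesteps the double-counting issue entirely, because the bound $(1-\prS)^{d_i^+}$ is attached to the ordered pair rather than to an isolated-vertex event.
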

\begin{proof}
    The average degree of $\G$ is
    \begin{math}
        d =%
        \frac{2m}{n} \leq \frac{1}{4}\log_{1/(1-\prS)} n.
    \end{math}
    Let $U$ be the set of vertices of $\VX{\G}$ that are of degree
    $\leq 2d$. By Markov's inequality, we have $|U| \geq n/2$. The
    probability that all the edges attached to a vertex $u \in U$
    fail, is at least
    \begin{equation*}
        (1-\prS)^{2d}
        =%
        (1-\prS)^{(\log_{1/(1-\prS)} n)/2}
        =
        \frac{1}{\sqrt{n}}.
    \end{equation*}
    Thus, the expected number of isolated vertices in $U$ is at least
    $|U|/\sqrt{n} \geq \sqrt{n}/2$. Each such isolated vertex induces
    $n-1$ failed pairs. We conclude that the expected number of failed
    pairs is at least $(\sqrt{n}/2) (n-1)/2 \geq n^{3/2}/8$.
\end{proof}

Thus, for $\prS$ bounded away from $0$, any graph with
$\Ex{\nFailX{\G}} = O(n)$ (i.e., only a linear number of failed
pairs), must have $\Omega(n \log n)$ edges.

\subsection{Exact spanner in \TPDF{$1$}{1}-dim with long paths}
\seclab{long:hop:1:dim}

\myparagraph{The construction.} %
The input is a number $n$, and a parameter $\prS \in (0,1)$. One can
safely assume that $n \geq 1/\prS$. Let $\G$ be the graph over
$\IRX{n}$, where we connect any two vertices if they are in distance
at most $\sL = \ceil{ ({\cODS }/{\prS})\log n }$ from each other,
where $\cODS$ is a sufficiently large constant.

\begin{lemma}
    \lemlab{long:paths}%
    Let $\G$ be the above constructed graph with
    $\cODS \tfrac{n}{\prS}\log n $ edges.  Then,
    $\nFailX{\bigl.\DistrY{\G}{\prS}} \leq \lbY{n}{\prS} +1$, for
    $\cODS$ a sufficiently large constant.
\end{lemma}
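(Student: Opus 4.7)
The plan is to compare the filtered spanner $\H_\G \sim \DistrY{\G}{\prS}$ to the filtered clique $\H_K \sim \DistrY{n}{\prS}$, coupled so that every edge shared by $\G$ and $K_n$ uses the same random bit. The structural observation I will exploit is that $\G$ is \emph{locally a clique}: for any window $W \subseteq \IRX{n}$ of at most $\sL + 1$ consecutive integers, every pair in $W$ lies within distance $\sL$, so the induced subgraph of $\G$ on $W$ equals the induced subgraph of $K_n$ on $W$. I then split the analysis of pairs $i<j$ into \emph{close pairs} ($j-i \leq \sL$) and \emph{far pairs} ($j-i > \sL$).

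For a close pair $(i,j)$, every candidate straight path uses only vertices in $[i,j]$ and edges of length $\leq j-i \leq \sL$; all such edges lie in $\G$ and, under the coupling, are assigned the same random bit in $\H_\G$ as in $\H_K$. Hence the failure events for $(i,j)$ in $\H_\G$ and in $\H_K$ literally coincide. Summing the corresponding indicator variables over close pairs contributes at most $\nFailX{\DistrY{n}{\prS}} = \lbY{n}{\prS}$ to $\Ex{\nFailX{\H_\G}}$.

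For a far pair $(i,j)$ with $\Delta = j-i$, I would carve $[i,j]$ into $k = \lceil \Delta/\sL \rceil$ consecutive sub-intervals of length at most $\sL$ via intermediate vertices $i=v_0 < v_1 < \cdots < v_k = j$. Straight paths concatenate, so a failure of $(i,j)$ in $\H_\G$ forces a failure of some sub-pair $(v_{\ell-1}, v_\ell)$ inside its own window; and since $\G$ is a clique on each such window, the restriction of $\H_\G$ there is distributed as $\DistrY{\sL+1}{\prS}$. Invoking \lemref{disconnect} on each sub-window with $t = \Theta((\cODS/(c_5 c))\log n) = \Omega(\log n)$ bounds each sub-pair failure probability by $n^{-\cODS/(c_5 c)}$, and a union bound gives
\begin{equation*}
    \Prob{(i,j)\text{ fails in }\H_\G}
    \;\leq\;
    k \cdot n^{-\cODS/(c_5 c)}.
\end{equation*}
Summing this over all far pairs in $\IRX{n}$ yields a total expected contribution of $O(n^3/\sL)\cdot n^{-\cODS/(c_5 c)}$, which is made strictly below $1$ by taking $\cODS$ a sufficiently large multiple of $c_5 c$.

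Adding the close-pair and far-pair bounds then gives $\nFailX{\DistrY{\G}{\prS}} \leq \lbY{n}{\prS} + 1$, as claimed. The only delicate point is verifying that \lemref{disconnect} is legitimately applicable inside each sub-window: this reduces to checking the joint hypothesis $\sL \geq t c_5 \sB$ and $t > 1 + \log_2 \sB$, both of which are guaranteed once $\cODS$ dominates $c \cdot c_5$ by a large enough constant factor (using also $\prS \geq 1/n$, so $\log_2 \sB = O(\log n)$).
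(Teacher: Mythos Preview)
Your argument is correct and matches the paper's approach: couple $\H_\G$ with $\H_K$ so that close-pair failure indicators coincide, and for far pairs concatenate medium-length sub-pairs, each handled by \lemref{disconnect}. The one omission is that your sub-intervals also need a \emph{lower} bound on their length (roughly $\sL/2$) for \lemref{disconnect} to apply with $t=\Omega(\log n)$ --- this follows if you divide $[i,j]$ into $k=\lceil\Delta/\sL\rceil$ pieces of nearly equal length, but you did not say so; the paper makes this explicit by choosing intermediate hops with $\sL/4 < i_{t+1}-i_t < \sL$.
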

\begin{proof}
    For any integer $i$, the induced subgraphs of $\G$ and $K_n$
    restricted to $\IRY{i}{i+\sL}$ are identical (i.e., $\G$ is a
    ``local'' clique for any such consecutive set of vertices). In
    particular, the analysis of \lemref{disconnect} implies that the
    probability of any two vertices $i < j$, such that
    $\sL/4 < |i-j| < \sL$, to be a failed pair (i.e., there is no
    straight path from $i$ to $j$) is at most (say) $1/n^{8}$, by
    making $\cODS$ sufficiently large.

    So, let $\H$ be a graph sampled from $\DistrY{\G}{\prS}$.  A pair
    $i<j$ is \emphw{short}, if $|i-j| \leq \sL$, and otherwise it is
    \emphw{long}. Since $\G$ and $K_n$ look the ``same'' for short
    pairs, it follows that the expected number of short failed pairs
    in $\H$ and in a random graph of $\DistrY{n}{\prS}$ are the same.

    As for a long pair $i <j$, there are at most $\binom{n}{2}$ such
    pairs. Each such pair can be connected by a path made out of
    medium length edges. Specifically, we choose indices
    $i = i_1 < i_2 < \cdots < i_k=j$, such that for all $t$, we have
    $\sL > i_{t +1} - i_t > \sL/4$. All these medium-length pairs are
    good, with probability $\geq 1 - \binom{n}{2}/n^8 \leq 1-1/n^6$.
    This implies that all the long pairs are reachable via these
    medium-length ``paths''.

    We conclude that
    \begin{math}
        \nFailX{\DistrY{\G}{\prS}}%
        \leq%
        \lbY{n}{\prS} + \binom{n}{2} /n^6 \leq%
        \lbY{n}{\prS} + 1.
    \end{math}
\end{proof}

\section{Constructing small-hop exact spanner in \TPDF{$1$}{1}-dim}
\seclab{1:dim:few:hops}

The above construction has a large diameter as far as number of
hops/edges. We want a better construction that has a small-hop
diameter.

\subsection{Handling the short pairs}
\seclab{short:pairs}

\begin{definition}[Parameters.]
    \deflab{params}%
    Let $\cKHop$ be a sufficiently large constant (its value would be
    determined shortly). The input parameters are $n$ and $\prS$, and
    let
    \begin{equation}
        \nbl = \frac{1}{\prS^{1/3}},%
        \qquad
        \sB
        =
        \sBVal%
        =%
        \ceil{\frac{\cKHop}{\prS^{4/3}} \ln n}
        \qquad\text{and}\qquad%
        \sL
        =
        6 \sB.%
        \Eqlab{s:b:val}
    \end{equation}
\end{definition}

Consider the graph $\G$ over $\IRX{n}$ where we connect two vertices
$i <j$ if
\begin{math}
    \cardin{i-j} \leq \sL.
\end{math}

\begin{lemma}
    For $\G$ the graph constructed above, let
    $\H \sim \DistrY{\G}{\prS}$.  The expected number of pairs
    $i<j < i + \sL$, such that this pair has no $2$-hop path in $\H$,
    is bounded by $n/\prS^2$.
\end{lemma}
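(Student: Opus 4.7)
The plan is to reduce to the clique analysis already carried out in \lemref{fail:edge} and \lemref{l:b;rough}. The key observation will be that $\G$ looks exactly like $K_n$ on every short window: if $i < j$ form a short pair with $\Delta = j - i < \sL$, then the direct edge $ij$ lies in $\EGX{\G}$, and for every intermediate index $t$ with $i < t < j$ we also have $\max(t - i,\, j - t) \leq \Delta < \sL$, so both edges $it$ and $tj$ lie in $\EGX{\G}$ as well. Consequently, the collection of paths of at most $2$ edges from $i$ to $j$ in $\G$ coincides exactly with the corresponding collection in $K_n$.

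With this reduction in hand, I would invoke the upper bound argument of \lemref{fail:edge} essentially verbatim. The $\Delta - 1$ candidate $2$-edge paths $itj$ are pairwise edge-disjoint and are also edge-disjoint from the direct edge $ij$. Each $2$-edge path is present in $\H \sim \DistrY{\G}{\prS}$ with independent probability $\prS^2$, and $ij$ is present with independent probability $\prS$. Hence the probability that $i$ and $j$ admit no $2$-hop path in $\H$ is at most $(1-\prS)(1-\prS^2)^{\Delta-1}$.

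To conclude, I would sum this bound over all short pairs by linearity of expectation, obtaining
\begin{equation*}
    \sum_{i = 1}^{n} \sum_{\Delta = 1}^{\sL - 1} (1-\prS)(1-\prS^2)^{\Delta-1}
    \;\leq\;
    n \sum_{\Delta = 0}^{\infty} (1-\prS^2)^{\Delta}
    \;=\; \frac{n}{\prS^2},
\end{equation*}
as required. There is no substantive obstacle here: the lemma is essentially \lemref{l:b;rough} applied window by window, exploiting that $\G$ restricted to any window of $\sL$ consecutive vertices is a clique, so the $2$-hop failure probabilities for short pairs in $\H$ match those in a sample from $\DistrY{n}{\prS}$.
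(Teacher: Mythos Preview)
Your proposal is correct and matches the paper's approach exactly: the paper's proof simply cites \lemref{fail:edge} for the per-pair $2$-hop failure probability and \lemref{l:b;rough} for the summation, relying implicitly on the same observation you spell out, namely that $\G$ restricted to any window of $\sL$ consecutive vertices is a clique. Your write-up is just a more explicit version of the same argument.
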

\begin{proof}
    \lemref{fail:edge} quantify the probability of having a $2$-hop
    path between two vertices.  Now, the result is readily implied by
    the analysis of \lemref{l:b:rough}.
\end{proof}

\subsection{Handling the long pairs}

We need a construction of a $2$-hop exact spanner on $\IRX{n}$.  There
are some beautiful constructions known of bounded hop spanners \cite{
   cfl-ufica-85, c-cftcp-87, as-opaol-24}. For example, Chazelle
\cite{c-cftcp-87} shows how to construct a graph with $O(n)$ edges,
and $O(\alpha(n))$-hop diameter, where $\alpha(\cdot)$ is the inverse
Ackermann function.  Happily\footnote{Or maybe sadly?}  the simple
$2$-hop construction (that we describe next) is sufficient for our
purposes.

\begin{lemma}
    \lemlab{2:hop}%
    One can construct a $2$-hop exact spanner on $\IRX{n}$ with $O(n
    \log n)$ edges.
\end{lemma}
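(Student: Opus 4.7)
The plan is the classical divide-and-conquer midpoint construction. Define $\G$ recursively on $\IRY{a}{b}$: pick the midpoint $m = \floor{(a+b)/2}$, add the $b-a$ edges connecting $m$ to every other vertex of $\IRY{a}{b}$, and recurse on the two subintervals $\IRY{a}{m-1}$ and $\IRY{m+1}{b}$. The base case is a single vertex, which contributes no edges.

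For correctness, consider any pair $i < j$ in $\IRX{n}$. Let $\IRY{a}{b}$ be the smallest recursive subinterval containing both $i$ and $j$, and let $m$ be its midpoint. By minimality, $i$ and $j$ are not both in $\IRY{a}{m-1}$ and not both in $\IRY{m+1}{b}$, so $i \leq m \leq j$. The construction places edges $im$ and $mj$ in $\G$ (with the convention that if $m = i$ or $m = j$ the path collapses to a single edge). Since
\begin{equation*}
    \dGZ{\G}{i}{m} + \dGZ{\G}{m}{j} = (m-i) + (j-m) = j - i,
\end{equation*}
this is a straight $2$-hop path realizing the exact distance, as required.

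For the edge count, if $T(n)$ denotes the number of edges produced on an interval of $n$ vertices, then $T(n) \leq T(\floor{n/2}) + T(\ceil{n/2}-1) + (n-1)$ with $T(1) = 0$. The standard master-theorem style analysis (or a direct argument: at recursion depth $k$ there are at most $2^k$ subintervals each contributing at most $n/2^k$ edges, and the depth is $O(\log n)$) yields $T(n) = O(n \log n)$.

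There is no substantive obstacle here; the only care needed is that the ``midpoint'' edges be correctly assigned so that for every pair the first separating midpoint lies weakly between them. Since that follows directly from the recursion, the lemma follows.
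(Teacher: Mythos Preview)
Your proposal is correct and is essentially identical to the paper's own proof: the same recursive midpoint construction, the same correctness argument via the smallest recursive subinterval separating $i$ and $j$, and the same $T(n)=O(n\log n)$ edge-count recurrence. Your write-up is, if anything, slightly more careful about the two unequal halves in the recurrence.
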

\begin{proof}
    Take the median $m = \floor{n/2}$, and connect it to all the
    vertices in $\IRX{n}$ except itself, by adding $n-1$ edges to the
    graph. Now, continue the construction recursively on
    $\IRY{1}{m-1}$ and $\IRY{m+1}{n}$. Let $\G$ be the resulting
    graph. The recursion on the number of edges of $\G$ is
    $E(n) = n-1 + 2 T(\floor{n/2}) = O( n \log n)$. As for the $2$-hop
    property, consider any $i < j$, and let $I=\IRY{\alpha}{\beta}$ be
    the lowest recursive subproblem still having $i$ and $j$ in the
    subproblem. Let $t$ be the median of $I$. If $t=i$ or $t=j$ then
    $ij \in \EGX{\G}$. Otherwise, $it, tj \in \EGX{\G}$, and
    $i <t < j$.
\end{proof}

We break $\IRX{n}$ into consecutive blocks, where each block has size
$\sB$. Thus, the $i$\th \emphi{block} is
$B_i = \IRY{(i-1) \sB +1}{i\sB}$. Let
$\BSet = \set{ B_1, \ldots, B_{n/\sB}}$ be the resulting set of
blocks.  We construct the graph of \lemref{2:hop} where $\BSet$ is the
set of vertices. We then take this graph, and every edge $B_i B_j$ is
replaced by the bipartite clique $B_i \otimes B_j$, see
\defref{b:clique}. Let $\H$ be the graph resulting form adding all
these bicliques to the graph $\G$ constructed in
\secref{short:pairs}. We claim that $\H$ is the desired graph.

\begin{claim}
    \clmlab{spanner}%
    For $n, \prS$ parameters, the graph $\H$ constructed above has
    $O(\tfrac{n}{\prS^{3/2}} \log^2 n)$ edges, Furthermore,
    $\nFailX{\DistrY{\H}{\prS}} \leq \lbY{n}{\prS} +1$, and for a random
    graph $\K \in \DistrY{\H}{\prS}$, and any $i < j$, such that
    $j > i + \sL$, we have that there is a straight path from $i$ to
    $j$ in $\K$ with at most $4$-hops. This holds with high
    probability for all such pairs.
\end{claim}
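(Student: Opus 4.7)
The plan is to handle three claims: the edge count, the deficiency bound, and the $4$-hop property for long pairs. \textbf{Edges:} The subgraph $\G$ from \secref{short:pairs} has $O(n\sL)=O(n\sB)$ edges; the block-level $2$-hop spanner from \lemref{2:hop} has $O((n/\sB)\log(n/\sB))$ edges, and each is inflated to a biclique of $\sB^2$ edges contributing $O(n\sB\log n)$ more. With $\sB=O(\log n/\prS^{4/3})$ from \Eqref{params} and $\prS^{4/3}\ge\prS^{3/2}$ for $\prS\in(0,1)$, the total is $O(n\log^2 n/\prS^{3/2})$. \textbf{Deficiency:} Since $\H$ contains every edge in every window $\IRY{i}{i+\sL}$, short pairs $(j-i\le\sL)$ in $\K\sim\DistrY{\H}{\prS}$ behave exactly as in $\DistrY{n}{\prS}$, contributing at most $\lbY{n}{\prS}$ to the expected deficiency. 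For long pairs I reuse the chain trick of \lemref{long:paths}: link each long pair by medium-length pairs with gap in $(\sL/4,\sL)$, apply \lemref{disconnect} with $\cKHop$ large to get per-medium failure at most $n^{-8}$, union-bound to per-long-pair failure at most $n^{-6}$, so long pairs contribute at most $\binom{n}{2}n^{-6}\le 1$.

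For the $4$-hop claim, fix a long pair with $i\in B_a$, $j\in B_b$, and $b-a\ge 6$ (forced by $j-i>\sL=6\sB$). The block-level $2$-hop spanner of \lemref{2:hop} supplies a block $B_c$ with $a\le c\le b$ so that the bicliques $B_a\otimes B_c$ and $B_c\otimes B_b$ are subgraphs of $\H$; the degenerate case $c\in\{a,b\}$ gives the direct biclique $B_a\otimes B_b$ and is handled by an even easier version of the argument below, so assume $a<c<b$. Let
\[
    X\ =\ \sum_{\substack{u<v<w\\u,v,w\in B_c}} Y_{u,v,w},\qquad Y_{u,v,w}=\mathbf{1}\bigl[iu,\,uv,\,vw,\,wj\in\K\bigr],
\]
counting the surviving $4$-hop straight paths through $B_c$. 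Each tuple has probability $\prS^4$, so $\mu:=\Ex{X}=\binom{\sB}{3}\prS^4=\Theta(\sB^3\prS^4)$. The crucial parameter choice $\nbl=\prS^{-1/3}$ forces $\sB^3\prS^4=\Theta(\cKHop^3\log^3 n)$, giving $\mu=\Theta(\log^3 n)$ uniformly in $\prS$.

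Now apply Janson's inequality:
\[
    \Prob{X=0}\ \le\ \exp\!\Bigl(-\frac{\mu^2}{2(\mu+\Delta)}\Bigr),
    \qquad
    \Delta=\sum_{(u,v,w)\ne(u',v',w'),\,\text{shared edge}}\Ex{Y_{u,v,w}\,Y_{u',v',w'}}.
\]
The dominant contribution to $\Delta$ comes from pairs sharing a single biclique edge (e.g., $u=u'$ but $(v,w)\ne(v',w')$, or symmetrically $w=w'$): there are $O(\sB^5)$ such pairs, each with $7$ distinct surviving edges contributing $\prS^7$, so $\Delta=O(\sB^5\prS^7)$. Hence $\mu^2/\Delta=\Omega(\sB\prS)=\Omega(\cKHop\log n/\prS^{1/3})\ge\Omega(\cKHop\log n)$; whether or not $\Delta$ dominates $\mu$, one checks $\mu^2/(2(\mu+\Delta))\ge\Omega(\cKHop\log n)$, and therefore $\Prob{X=0}\le n^{-\Omega(\cKHop)}$. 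Picking $\cKHop$ large enough and union-bounding over the $O(n^2)$ long pairs yields the $4$-hop property simultaneously for all of them, with high probability. \textbf{The main obstacle} is the bookkeeping for $\Delta$: one must enumerate every way two $4$-tuples can share one or more of their four edges (both biclique and within-block types), sum the contributions, and verify the dominant term is $O(\sB^5\prS^7)$. The exponent $\nbl=\prS^{-1/3}$ is precisely the balance that makes $\mu^2/\Delta\ge\Omega(\log n)$ throughout the valid $\prS$-range, which is exactly what lets Janson's inequality beat the $O(n^2)$ union bound.
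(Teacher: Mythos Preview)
Your edge count and deficiency bounds are correct and track the paper's reasoning essentially verbatim. For the $4$-hop property the paper takes a different route: it invokes the block-level $2$-hop spanner between $B_{a+1}$ and $B_{b-1}$ (not between $B_a$ and $B_b$), obtaining a middle block $B_m$, and argues that a path $i \to i_2 \in B_{a+1} \to i_3 \in B_m \to i_4 \in B_{b-1} \to j$ survives, with the first and last edges local and the middle two coming from the bicliques. It then \emph{defers} the survival proof to the later expansion analysis surrounding \lemref{expander}, noting that full bicliques are only easier than random connectors. Your Janson route is a legitimate, more self-contained alternative: the second-moment bookkeeping you outline is correct, and your identification of $\nbl=\prS^{-1/3}$ as exactly the balance forcing $\mu^2/(\mu+\Delta)=\Omega(\cKHop\log n)$ is a clean explanation of that parameter that avoids forward references.

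There is, however, a real gap in your degenerate case. When the block-level $2$-hop spanner supplies the direct edge $B_aB_b$ and you take $c\in\{a,b\}$, your path template requires three vertices $u<v<w$ in $B_c$ strictly between $i$ and $j$; if $c=a$ and $i$ is among the last three elements of $B_a$ (or $c=b$ and $j$ among the first three of $B_b$), no such triple exists and $X$ is identically zero---the argument is not ``even easier,'' it is vacuous. The fix is small: in this degenerate case one of $a,b$ is the median of the relevant recursive interval in the construction of \lemref{2:hop} and is therefore biclique-connected to every block in that interval; combined with the fact that adjacent blocks are fully connected by local edges (distance $<2\sB<\sL$), you can always select a \emph{strictly} intermediate $c'\in\{a+1,b-1\}$ for which both $B_a\otimes B_{c'}$ and $B_{c'}\otimes B_b$ are complete bipartite subgraphs of $\H$. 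Your Janson count on $\binom{\sB}{3}$ triples in $B_{c'}$ then goes through unchanged.
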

\begin{proof}
    Let $B_s$ and $B_t$ be the two blocks containing $i$ and $j$,
    respectively.  By construction, there is a middle block $B_m$,
    such that $B_{s+1} \otimes B_m$ and $B_m \otimes B_{t-1}$ are
    present in the graph.  One can show, that with high probability,
    there exists $i_1 = i$, $i_2 \in B_{s+1}$, $i_3 \in B_{m}$,
    $i_4 \in B_{t-1}$ and $i_5 =j$, such that $i_1 i_2 i_3i_4i_5$
    exists in $\K$ with high probability. We omit proving this here,
    as it follows from the analysis below.

    As for size, we have that the graph $\H$ has
    \begin{math}
        O\pth{n \sL + \pth{\frac{n}{\sB} \log \frac{n}{\sB}} \cdot
           \sB^2 } =%
        O\pth{\frac{n}{\prS^{4/3}} \log^2 n}
    \end{math}
    edges.~
\end{proof}

\subsection{Reducing the size: Constructing sparse bipartite
   connectors}

To reduce the number of edges in the above graph $\H$, the idea is to
replace the bicliques by a bipartite expander. This expander is
generated by picking each edge of the biclique randomly with
probability $\pr$.  A key tool in our analysis is understanding how
connectivity behaves between two disjoint sets, when edges are chosen
randomly.

\begin{lemma}
    \lemlab{silly:2}%
    Let $B,C \subseteq \IRX{n}$ be two disjoint sets, and consider the
    bipartite graph $\G = B \otimes C$. Let $\pr$ be some
    probability. Let $\H \sim \DistrY{\G}{\pr}$. Let $Y$ be the number
    of vertices in $C$ that have an adjacent edge in $\H$. Then, we
    have that
    \begin{compactenumI}
        \smallskip%
        \item
        $\mu = \Ex{Y} \geq M = \cardin{C}(1- \exp \pth{ - \pr
           \cardin{B}})$.  \smallskip%
        \smallskip
        \item
        $\Prob{\bigl.Y \leq (3/4)\Ex{Y}} \leq \exp( - \mu / 4 ) $.
        \smallskip
        \item $\Prob{\bigl.Y \leq \Ex{Y}/2} \leq \exp( - \mu / 8 ) $.
    \end{compactenumI}
\end{lemma}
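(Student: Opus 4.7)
The plan is to treat the two parts independently, each being a direct computation once we identify the right set of independent indicator random variables.

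For part (I), I would write $Y = \sum_{c \in C} Y_c$, where $Y_c$ is the indicator that vertex $c$ has at least one incoming edge surviving in $\H$. Since $c$ has $\cardin{B}$ potential incoming edges, each surviving independently with probability $\pr$, we have $\Prob{Y_c = 1} = 1 - (1-\pr)^{\cardin{B}} \geq 1 - \exp\pth{-\pr\cardin{B}}$, using the inequality $1 - x \leq e^{-x}$. Summing over $c \in C$ by linearity of expectation gives $\mu = \Ex{Y} \geq \cardin{C}\bigl(1 - \exp\pth{-\pr\cardin{B}}\bigr) = M$.

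For part (II), the key observation is that the indicators $\{Y_c\}_{c \in C}$ are \emph{independent}, because $Y_c$ depends only on the edges of $\G$ incident to $c$, and these edge sets are disjoint across different vertices of $C$ (this is where the bipartite structure matters). Hence $Y$ is a sum of independent Bernoulli variables, and the standard Chernoff lower-tail bound applies: with $\delta = 1/4$,
\begin{equation*}
    \Prob{\bigl. Y \leq (1-\delta)\mu}
    \;\leq\;
    \exp\!\pth{ - \frac{\delta^2 \mu}{2} }
    \;=\;
    \exp\!\pth{ - \frac{\mu}{32} }
    \;\leq\;
    \exp\!\pth{ - \frac{M}{32} },
\end{equation*}
where the last inequality uses $\mu \geq M$ from part (I).

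There is no real obstacle here: the bipartite structure makes the indicators independent, part (I) is just $1-x \leq e^{-x}$ applied once per vertex, and part (II) is a direct invocation of Chernoff's inequality with $\delta = 1/4$. The only thing to be mildly careful about is that the Chernoff bound is stated in terms of $\mu$, not $M$, so one uses monotonicity of the right-hand side to replace $\mu$ by the lower bound $M$ at the end.
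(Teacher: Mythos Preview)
Your proposal is correct and matches the paper's proof essentially line for line: the paper also decomposes $Y$ into per-vertex indicators, uses $1-x \leq e^{-x}$ for part (I), and applies Chernoff with $\delta = 1/4$ for part (II). If anything, you are slightly more explicit than the paper in noting why the indicators are independent and in justifying the final replacement of $\mu$ by $M$ via monotonicity.
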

\begin{proof}
    Let $\beta = \cardin{B}$.  Let $X_i = 1$ $\iff$ there is an edge
    in $\H$ that enters the $i$\th vertex of $C$ (otherwise
    $X_i=0$). As $1-x \leq \exp(-x)$, we have
    \begin{equation*}
        \Prob{X_i=1} = 1 - (1-\pr)^\beta%
        \geq%
        1 - \exp( -\pr \beta ).
    \end{equation*}
    Namely, for $Y= \sum_{i=1}^{\cardin{C}} X_i$, we have
    \begin{math}
        \mu =%
        \Ex{Y} \geq%
        \cardin{C} (1- \exp\pth{ - \pr \beta }).
    \end{math}

    Finally, by \thmrefY{chernoff}{Chernoff's inequality}, we have
    \begin{math}
        \Prob{ Y \leq \frac{3}{4} \Ex{Y}} \leq \exp\pth{ - \mu (9 /32)}.
    \end{math}
\end{proof}

\begin{definition}
    \deflab{expander}%
    Let $X,Y \in \BSet$ be two distinct blocks, and let
    \begin{equation*}
        \tau = \frac{\cKHop^2 \nbl}{\prS \sB}
    \end{equation*}
    (see \defref{params}).  A random graph
    $\GGen(X,Y) \sim \DistrY{ X\otimes Y}{\tau}$ is a \emphi{bipartite
       connector} between $X$ and $Y$, see \defref{b:clique}.

    For a graph $\G$, and a set $S \subseteq \VX{\G} $, let
    \begin{math}
        \Gamma_\G( S ) =%
        \Set{ y \in \VX{\G}}{x \in S, xy \in \EGX{ \G}}
    \end{math}
    be the set of \emphw{neighbors} of $S$ in $\G$.
\end{definition}

\begin{fact}
    \factlab{exp}%
    For $x \in (0,1)$, we have
    \begin{math}
        \exp( -\frac{x}{1-x} ) \leq%
        1-x \leq \exp(-x) \leq 1-x/2.
    \end{math}
\end{fact}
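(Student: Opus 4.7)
The three inequalities are classical, and I would prove each one separately rather than search for a unified argument.

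The middle inequality $1-x \leq \exp(-x)$ is the standard bound $\ln(1+t) \leq t$ applied to $t=-x \in (-1,0)$; equivalently, take logarithms of both sides and use concavity of $\ln$ against its tangent at $1$.

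For the left inequality $\exp\!\bigl(-\tfrac{x}{1-x}\bigr) \leq 1-x$, I would take logs to reduce it to $\tfrac{x}{1-x} \geq \ln\!\tfrac{1}{1-x}$. The clean substitution is $u = 1/(1-x)$, which is $>1$ for $x \in (0,1)$; then $\tfrac{x}{1-x} = u-1$, and the inequality becomes the textbook bound $u - 1 \geq \ln u$, valid for all $u > 0$. This reduces the claim to a single-line application of a standard inequality.

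For the right inequality $\exp(-x) \leq 1 - x/2$, which is only valid on a bounded range, the cleanest route is to clear denominators. Multiplying by $e^x(2-x) > 0$ (note $2-x > 0$ on $(0,1)$) turns the claim into $e^x(2-x) \geq 2$. Define $g(x) = e^x(2-x) - 2$. Then $g(0) = 0$, and a direct computation gives $g'(x) = e^x(2-x) - e^x = e^x(1-x)$, which is nonnegative on $[0,1]$. Hence $g$ is nondecreasing on $(0,1)$, so $g(x) \geq g(0) = 0$, which is exactly what we need.

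Of the three, the last is the most delicate, since the constant $1/2$ is tight at $x=0$ and the inequality fails for large $x$; the auxiliary function $g$ exactly captures why the argument needs $x \leq 1$ (the sign of $g'$ flips at $x=1$). None of the three steps requires more than elementary calculus, so I do not foresee any real obstacle.
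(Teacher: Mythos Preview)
Your proof is correct; each of the three inequalities is handled cleanly, and the argument for the rightmost one via $g(x)=e^x(2-x)-2$ is tidy and makes the role of the hypothesis $x<1$ explicit. The paper itself states this as a \textsf{Fact} without proof, so there is nothing to compare against---your write-up simply supplies the missing elementary verification.
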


\begin{remark}[Intuition]
    There are too many parameters flying around, so lets try to
    understand what is going on. Consider two blocks $L,R$ that are
    ``far'' from each other, that we construct the random connector
    $\G = \GGen(L,R) \sim \DistrY{ L\otimes R}{\tau} $ between
    them. Simulating the edge deletion with probability $\prS$ on
    $\G$, is equivalent to sampling a graph
    $\H \sim \DistrY{ \G }{\prS}$. This is equivalent to directly
    sampling $\H \sim \DistrY{ L\otimes R}{\pr}$, where
    \begin{equation*}
        \pr =%
        \tau \prS%
        =%
        \prS \tfrac{\cKHop^2\nbl}{\prS \sB} = \tfrac{\cKHop^2\nbl}{
           \sB}%
        \qquad\implies\qquad
        (\cKHop-1) \tfrac{\prS}{ \ln n}
        \leq
        \pr \leq
        \cKHop \tfrac{\prS}{ \ln n},
    \end{equation*}
    since $\nbl = \frac{1}{\prS^{1/3}}$, $\sB=\sBValS$, and
    $\tau = \frac{\cKHop^2 \nbl}{\prS \sB}$. In particular, the degree
    of each vertex of $L \otimes R$ is $\sB$, and thus the expected
    degree of a vertex in $\H$ is
    \begin{equation}
        \mu_d
        =
        \sB \pr
        =
        \Theta\pth{ \tfrac{\cKHop \nbl}{\prS} \ln n \cdot
           \cKHop \tfrac{\prS}{ \ln n}
        }
        =%
        \Theta(\cKHop^2 \nbl )
        =
        \Theta( \cKHop^2 / \prS^{1/3} ).
    \end{equation}
    Namely, for a constant $\prS$, $\H$ (informally) is a constant
    degree bipartite random graph --- that is, a bipartite expander.
\end{remark}

The following (tedious) lemma, quantify the expansion of the residual
graph $\H$.

\begin{lemma}\RefProofInAppendix{expander}
    \lemlab{expander}%
    Let $\nbl$, $\sB$, $\sL$ and $\tau$ be the parameters as specified
    in \defref{params} and \defref{expander}.  Let $L,R \in \BSet$ be
    two distinct blocks (of size $\sB$), and consider a random
    bipartite connector $\G= \GGen(L,R)$. Let
    $\H \sim \DistrY{\G}{\prS}$.  We have the following properties:
    \begin{compactenumA}
        \smallskip%
        \item  \itemlab{jump:2} For a set $S \subseteq L$, with
        $\cardin{S} \geq \prS\sB/2$, we have
        $ \cardin{\Gamma_\H( S)} > {\prS}^{2/3}\sB/4 = \Omega( \nbl^2
        \ln n )$.

        \smallskip%
        \item \itemlab{jump:3} For a set $S \subseteq L$, with
        $\cardin{S} \geq \prS^{2/3}\sB$, we have
        $ \cardin{\Gamma_\H( S)} > \prS^{1/3}\sB/2 = \Omega( \nbl^3
        \ln n ) = \Omega( \tfrac{1}{\prS} \ln n )$.
    \end{compactenumA}
    \smallskip%
    Each of the above holds with probability $\geq 1-1/n^{O(1)}$.
\end{lemma}
\begin{proof:e}{\Xlemref{expander}}{expander}
    As a reminder of the parameters, we have
    $\nbl = \frac{1}{\prS^{1/3}}$, $\sB=\sBValS$, and
    $\tau = \frac{\cKHop^2 \nbl}{\prS \sB}$, and
    $\H \sim \DistrY{ L \otimes R}{\pr}$, where
    \begin{math}
        \pr \geq%
        (\cKHop-1) \tfrac{\prS}{ \ln n} .
    \end{math}
    Consider the subgraph of $\G$ involving only $S$ on the left side.
    Somewhat mysteriously, observe that
    \begin{equation*}
        \pr |S|
        \geq
        \frac{\prS \sB}{2}
        \cKHop \frac{\prS}{2 \ln n}
        \geq
        \cKHop^2 \nbl \ln n \cdot
        \frac{\prS}{4 \ln n}
        =
        \cKHop^2 \frac{\prS^{2/3}}{4 }.
    \end{equation*}
    Let $Z= \cardin{\Gamma_\H( S)}$. By \lemref{silly:2} and
    \factref{exp}, we have
    \begin{equation*}
        \Ex{\bigl.Z}%
        \geq%
        M = \sB(1- \exp \pth{ - \pr
           \cardin{S}})
        \geq
        \sB\bigl(1- \exp ( - (\cKHop^2 /4) \prS^{2/3}) \bigr)
        \geq
        \frac{\sB  \prS^{2/3}}{2}%
        \geq%
        \frac{\cKHop}{2} \ln n,
    \end{equation*}
    and
    \begin{math}
        \Prob{Z \leq M/2} \leq \Prob{Z \leq \prS^{2/3} \sB/4} \leq%
        \Prob{\bigl. Z \leq \Ex{Z}/2} \leq \exp( -M/8 ) \leq
        1/n^{\Theta(\cKHop )}.
    \end{math}

    \medskip

    (B) Observe that
    \begin{math}
        \pr |S| \geq%
        \tfrac{\cKHop^2\nbl}{ \sB} \cdot \frac{\prS^{2/3}}{2}\sB =%
        \cKHop^2 \prS^{1/3} \geq \prS^{1/3}.
    \end{math}
    The claim now follows by the argument as above.
\end{proof:e}

\subsection{The improved \TPDF{$4$}{4}-hop \dependable exact spanner
   construction}

We first construct the graph $\G_1$ over $\IRX{n}$ as described in
\secref{short:pairs}. Next, break $\IRX{n}$ into consecutive blocks,
where each block has size $\sB = \sBValS$. Thus, the $i$\th block is
$B_i = \IRY{(i-1) \sB +1}{i\sB}$, and let
$\BSet = \{ B_1, \ldots, B_{n/\sB} \}$ be the resulting set of blocks.
We construct the graph $\G_\BSet$ of \lemref{2:hop} on $\BSet$ (i.e.,
a vertex in this graph is a block of $\BSet$). For every edge
$B_i B_j \in \EGX{G_\BSet}$ is replaced by the bipartite random
connector of \defref{expander}, that is $\GGen(B_i, B_j)$, and in
particular, we add the edges of this graph, to the graph $\G_1$. Let
$\G$ be the resulting graph.

By using the random connector instead of a bipartite clique, we
reduced its size.

\begin{lemma}\RefProofInAppendix{n:edges}
    \lemlab{n:edges}%
    The graph $\G$ has (in expectation)
    $\displaystyle O\Bigl( \frac{n}{\prS^{4/3}} \log n\Bigr)$ edges.
\end{lemma}
\begin{proof:e}{\Xlemref{n:edges}}{n:edges}
    Since the initial graph $\G_1$ connects only vertices in distance
    $\leq \sL$ from each other, it has at most $O(n \sL)$ edges, where
    $\sL = \Theta( \tfrac{1}{\prS^{4/3}} \log n)$ by
    \defref{params}. Each vertex of the graph $\GGen(X,Y)$, from
    \defref{expander}, has in expectation degree
    \begin{math}
        \sB \cdot \tau =%
        \sB \frac{\cKHop^2 \nbl}{\prS \sB} =%
        \Theta\bigl( \frac{1}{\prS^{4/3}} \bigr).
    \end{math}
    Thus, overall in expectation, the graph $\GGen(X,Y)$ has
    $O( \sB/\prS^{4/3})$ edges.  The graph $\G_\BSet$ has
    $O\pth{ \frac{n}{\sB} \log \frac{n}{\sB}}$ edges.  Thus, the
    edge-replacement process for $G_\BSet$ adds at most
    \begin{equation*}
        O\pth{ \frac{n}{\sB} \log \frac{n}{\sB} \cdot
           \frac{\sB}{\prS^{4/3} } }%
        =%
        O\pth{ \frac{n}{\prS^{4/3} } \log n
        }%
    \end{equation*}
    edges to $\G_1$.
\end{proof:e}

\begin{lemma}%
    \lemlab{ex:number}%
    Let $\G$ be the above constructed graph with
    $O( \tfrac{n}{\prS^{4/3}}\log n )$ edges.  Then, for a random
    graph $\H \in \DistrY{\G}{\prS}$, we have,
    $\nFailX{\DistrY{\G}{\prS}} = \Ex{\nFailX{\H}} \leq \lbY{n}{\prS}
    +1$.  Furthermore, the expected number of pairs $i <j$ such that
    there is no straight $\leq 4$-hop path from $i$ to $j$ in $\H$, is
    $\leq n/\prS^2 + 1$.
\end{lemma}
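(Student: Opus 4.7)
The plan has two parts. The first bound on the total number of failed pairs is almost immediate: since $\G$ was built by adding edges on top of $\G_1$ (the long-distance construction from \secref{long:hop:1:dim}), we have $\G \supseteq \G_1$, so the existence of a straight path between two vertices in $\H \sim \DistrY{\G}{\prS}$ is at least as likely as in a sample from $\DistrY{\G_1}{\prS}$. Applying \lemref{long:paths} to $\G_1$ directly yields $\nFailX{\DistrY{\G}{\prS}} \leq \nFailX{\DistrY{\G_1}{\prS}} \leq \lbY{n}{\prS}+1$.

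For the $4$-hop bound, I will split pairs $i<j$ into \emph{short} ($j-i \leq \sL$) and \emph{long} ($j-i > \sL$). Short pairs are handled at once: $\G_1$ contains every edge within distance $\sL$, so the $2$-hop analysis of \lemref{fail:edge} transfers unchanged, and summing the failure probability as in the proof of \lemref{l:b;rough} bounds the expected number of short $4$-hop failures (in fact already $2$-hop failures) by $n/\prS^2$.

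The real work is for the long pairs, where I will exhibit a straight $4$-hop path in $\H$ with overwhelming probability. Fix $i \in B_s$ and $j \in B_t$; since $j-i > \sL = 6\sB$ we have $t \geq s+6$, so the $2$-hop spanner $\G_\BSet$ on $\BSet$ produces either a direct edge $B_{s+1}B_{t-1}$ or an intermediate block $B_m$ with $s+1 < m < t-1$ such that both $B_{s+1}B_m$ and $B_m B_{t-1}$ are edges of $\G_\BSet$; in either case the associated bipartite connectors are in $\G$. The target path has the shape $i \to v_2 \in B_{s+1} \to v_3 \in B_m \to v_4 \in B_{t-1} \to j$, and straightness is automatic from the block order. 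I will build it in three expansion stages. Let $S_2$ be the $\H$-neighbors of $i$ in $B_{s+1}$; since the $\G_1$-edges $\set{i} \times B_{s+1}$ are each retained independently with probability $\prS$, a Chernoff bound (via \lemref{silly:2}) gives $\cardin{S_2} \geq \prS\sB/2$ with failure probability $1/n^{\Omega(\cKHop)}$. Next, \lemref{expander}(B) applied to $S_2$ inside the connector $\GGen(B_{s+1}, B_m)$ yields $S_3 \subseteq B_m$ of $\H$-neighbors with $\cardin{S_3} \geq \prS^{2/3}\sB/2$; \lemref{expander}(C) applied to $S_3$ inside $\GGen(B_m, B_{t-1})$ then yields $S_4 \subseteq B_{t-1}$ with $\cardin{S_4} \geq \prS^{1/3}\sB/2 = \Omega(\prS^{-1}\log n)$. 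Finally, conditional on this size, the probability that \emph{no} $\G_1$-edge from $S_4$ to $j$ survives in $\H$ is at most $(1-\prS)^{\cardin{S_4}} \leq \exp(-\Omega(\cKHop \log n)) = 1/n^{\Omega(\cKHop)}$. The direct-edge case collapses to two expansion stages and is strictly easier.

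Union-bounding this $1/n^{\Omega(\cKHop)}$ failure probability over the at most $\binom{n}{2}$ long pairs contributes at most $1$ to the expected number of $4$-hop failures, once $\cKHop$ is chosen sufficiently large; combined with the $n/\prS^2$ bound for short pairs this yields the claimed $n/\prS^2 + 1$. The main technical obstacle is securing the independence needed to chain the expansion stages: the four batches of edges used (the $\G_1$-edges $\set{i}\times B_{s+1}$, the two connectors $\GGen(B_{s+1},B_m)$ and $\GGen(B_m,B_{t-1})$, and the $\G_1$-edges from $B_{t-1}$ to $\set{j}$) are pairwise disjoint, which is what legitimizes the sequential conditional application of \lemref{expander}. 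Once this disjointness is recorded, only chasing constants in the Chernoff bounds remains.
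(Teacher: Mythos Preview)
Your proposal is correct and follows the same route as the paper: containment of the graph of \lemref{long:paths} for the first claim, and for the second the expansion chain $i \to B_{s+1} \to B_m \to B_{t-1} \to j$ via \lemref{expander}(B),(C) bookended by the local-clique edges of $\G_1$, finished by a union bound over the long pairs. One small wording slip: the $\G_1$ in play here is the graph of \secref{short:pairs} (with $\sL = 6\sB$), not that of \secref{long:hop:1:dim}; since the former contains the latter your appeal to \lemref{long:paths} is still valid, and indeed the paper phrases this step exactly as ``$\G$ contains the corresponding graph of \lemref{long:paths}''.
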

\begin{proof}%
    The graph $\G$ contains the corresponding graph of
    \lemref{long:paths}, and the first part of the claim readily
    follows.

    So consider a long pair $i + \sL < j$. We claim that there is a
    $4$-hop straight path from $i$ to $j$.  In particular, there are
    at least (say) $8$ blocks between $i$ and $j$. Formally, let
    $B_\alpha, B_\beta \in \BSet$ be the two blocks containing $i$ and
    $j$, respectively.

    A key property is that, by construction, all the vertices of two
    adjacent blocks are connected (by a clique), by construction.
    Thus $i$, is connected by edge to all the vertices of
    $B_{\alpha+1}$. As such, the vertex $i$ is connected (in
    expectation) to
    \begin{equation*}
        \mu =
        \prS \cardin{B_{\alpha+1}}
        =%
        \prS \sB%
        \geq
        \cKHop \nbl \ln n
        =
        \cKHop \frac{ \ln n}{\prS^{1/3}}.
    \end{equation*}
    vertices of $B_{\alpha+1}$ by a direct edge, a as
    $\nbl = \frac{1}{\prS^{1/3}}$ and $\sB=\sBValS$.

    Thus, by Chernoff's inequality (with high probability), there are
    at least $\prS \sB/2$ vertices in $B_{\alpha+1}$ that are
    reachable from $i$ by a direct edge, and let $S_1$ be this set of
    vertices.  The $2$-hop graph $\G_\BSet$, has a middle block
    $B_\gamma$, such that we constructed the random connectors
    $\GGen(B_{\alpha+1}, B_\gamma)$ and
    $\GGen(B_{\gamma}, B_{\beta -1})$, with $\alpha < \gamma < \beta$.

    By \lemref{expander} \itemref{jump:2}, applied to
    $(B_{\alpha+1}\otimes B_\gamma) \cap \H$ and $S_1$, there is a set
    $S_2 \subseteq B_\gamma$ of size at least $\prS^{2/3} \sB/2$, that
    all its vertices are reachable from $i$ by a $2$-hop straight path
    in $\H$.

    By \lemref{expander} \itemref{jump:3}, applied to
    $(B_\gamma\otimes B_{\beta-1}) \cap \H$ and $S_2$, there is a set
    $S_3 \subseteq B_{\beta-1}$ of size at least
    $\prS^{1/3} \sB/2 = \Omega( \tfrac{1}{\prS}\ln n)$, such that all
    its vertices are reachable from $i$ by $3$-hop straight path in
    $\H$.  Each one of the above three groups have the desired bounds
    on their size with high probability in $n$.

    Finally, the probability that none of the vertices of
    $S_3$ have direct edge into $j$ is at most
    \begin{equation*}
        (1-\prS)^{|S_3|}
        \leq%
        \exp\Bigl(-\frac{\prS^{1/3} \sB}{2} \cdot \prS \Bigr)
        <%
        \exp\Bigl(-\frac{\prS^{4/3} }{2}
        \cdot
        \frac{\cKHop}{\prS^{4/3}} \ln n
        \Bigr)
        <%
        \frac{1}{n^{O(\cKHop )}},
    \end{equation*}
    see \Eqref{s:b:val}.

    We conclude that all long pairs have a $4$-hop paths between them
    with high probability.  The expected number of failed $\leq 4$-hop
    pairs that are short is bounded $n/\prS^2$, by \lemref{l:b:rough}, as
    $\G$ looks locally like a clique if $j < i + \sL$.
\end{proof}%

We summarize the result the above implies.

\begin{lemma}
    \lemlab{4:hops}%
    Let $n > 0$ be an integer, and $\prS \in (0,1)$ a parameter. The
    above constructed graph $\G$ over $\IRX{n}$ has
    $O( (n/\prS^{4/3}) \log n)$ edges. Furthermore, for a graph
    $\H \sim \DistrY{\G}{\prS}$, we have that $\H$ provides a
    $\leq 4$-hop path for all pairs except (in expectation)
    $\leq n/\prS^2 + 1$ pairs. For all pairs $i,j$, with
    $j > i + \Omega( \prS^{-4/3} \log n )$, such a $4$-hop straight
    path exists with high probability.
\end{lemma}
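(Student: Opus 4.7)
The plan is to package this as a summary that stitches together the three preceding lemmas in this subsection, since all the genuine probabilistic work has already been done. For the edge count, I would quote the lemma that bounds $|\EGX{\G}|$: it is the sum of $O(n \sL) = O((n/\prS)\log n)$ edges coming from the short-range subgraph $\G_1$ of \secref{short:pairs}, plus $O((n/\sB)\log(n/\sB)) \cdot O(\sB/\prS^{4/3})$ edges from replacing each edge of the block-level $2$-hop spanner $\G_\BSet$ by a bipartite connector. Both contributions are $O((n/\prS^{4/3})\log n)$.

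For the deficiency statement, I would partition the pairs $i<j$ into \emph{short} (with $j-i \leq \sL$) and \emph{long} (with $j-i > \sL$). For short pairs, the graph $\G$ restricted to any window of length $\sL$ is a complete graph, so inside such a window $\H$ is distributed exactly as $\DistrY{K_{\sL}}{\prS}$, and the calculation of \lemref{l:b;rough} bounds the expected number of short pairs with no $\leq 2$-hop (hence no $\leq 4$-hop) straight path by $n/\prS^2$. This accounts for the $n/\prS^2$ term.

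For the long pairs, I would chain the three parts of \lemref{expander}. Fix $i<j$ with $j>i+\sL$, and let $B_\alpha, B_\beta$ be the blocks containing $i,j$. Since $\G_\BSet$ is a $2$-hop exact spanner on $\BSet$, there is a middle block $B_\gamma$ with $\alpha<\gamma<\beta$ for which $\G$ contains both bipartite connectors $\GGen(B_{\alpha+1},B_\gamma)$ and $\GGen(B_\gamma,B_{\beta-1})$. Applying \lemref{expander}(A) to $S=\{i\}$ gives a set $S_1 \subseteq B_{\alpha+1}$ of size $\Omega(\prS \sB)$ reachable from $i$ in one hop; applying (B) to $S_1$ gives $S_2 \subseteq B_\gamma$ of size $\Omega(\prS^{2/3}\sB)$ reachable in two hops; applying (C) to $S_2$ gives $S_3 \subseteq B_{\beta-1}$ of size $\Omega(\prS^{1/3}\sB) = \Omega((1/\prS)\log n)$ reachable in three hops. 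Finally, the probability that none of the edges from $S_3$ to $j$ survive is at most $(1-\prS)^{|S_3|} \leq 1/n^{\Omega(\cKHop)}$. A union bound over the $O(n^2)$ long pairs yields the \emph{with high probability} statement, and the expected number of long failed pairs is at most $n^2 / n^{\Omega(\cKHop)} \leq 1$, giving the ``$+1$'' in the deficiency bound.

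The only real obstacle is calibrating constants so that a single union bound covers every long pair and every application of \lemref{expander} simultaneously: the three parts of that lemma each succeed with probability $\geq 1 - 1/n^{\Omega(\cKHop)}$, and there are $O(n/\sB)$ block-level edges and $O(n^2)$ long pairs to union-bound over. Choosing $\cKHop$ large enough relative to the constants hidden in those estimates makes the total failure probability $1/n^{\Omega(1)}$, and nothing deeper than this bookkeeping is required.
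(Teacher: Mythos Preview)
Your proposal is correct and matches the paper's approach exactly: the lemma is stated in the paper without a separate proof, precisely because it is a summary of the two preceding lemmas (the edge-count lemma and the deficiency/4-hop lemma), and your write-up reproduces that argument---short pairs handled by the local-clique structure via \lemref{l:b;rough}, long pairs handled by chaining \lemref{expander} (A)$\to$(B)$\to$(C) followed by a final edge into $j$, with a union bound over all long pairs giving the ``$+1$''. Your per-connector edge count $O(\sB/\prS^{4/3})$ is in fact cleaner than the paper's intermediate exponents (which contain typos), though both reach the same $O((n/\prS^{4/3})\log n)$ total.
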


\subsection{Dependable \TPDF{$\kk$}{k}-hop \TPDF{$1$}{1}-dim spanner}

One can tradeoff the dependency on $\prS$ by allowing more hops for
the spanner. For example, a careful inspection of the above
constructions shows that the original construction of
\secref{long:hop:1:dim} augmented with the $2$-hop block spanner,
results in a graph with $O( (n/\prS) \log n)$ edges, and a hop
diameter $O( \log(1/\prS) )$.  Similarly, for any $k > 3$ integer, one
can construct a spanner that has $\leq \kk$-hop path, for all but
$n/\prS^2$ pairs, with
\begin{math}
    O\pth{ (n/\prS^{1+1/(\kk-1)}) \log n}
\end{math}
edges.

The construction is similar to the above.  To this end, let
$\nbl = 1/\prS^{1/(k-1)}$, and let
\begin{equation*}
    \sB
    =
    O\Bigl( \nbl \frac{\cKHop}{\prS} \ln n\Bigr)
    \qquad\text{and}\qquad%
    \sL
    =
    (\kk + 4) \sB.%
\end{equation*}
We connect all the pairs that are in distance $\leq \sL$ from each
other. Similarly, we inject the $2$-hop spanner on the blocks, where
every edge between two blocks is replaced by a random connector, where
the degree of a vertex in the connector, in expectation is
$O(\nbl/\prS)$. Let $\G$ be the resulting graph, and let
$\H \sim \DistrY{\H}{\prS}$.

As before, if a pair is long, then there is a path
$i \rightarrow B_1 \rightarrow B_2 \rightarrow \cdots \rightarrow
B_{\kk-1} \rightarrow j$. connecting them. Let $S_t$ all the vertices
of $B_t$ that are reachable via a $t$-hop path from $i$ in $\H$, and
let $\alpha_t = \cardin{S_t}$.  It is not hard to prove, using the
same argument as above, that with high probability, that
$ \alpha_1 \geq \prS \sB/2 =\Theta( \nbl \ln n) $, and more generally,
for $i \geq 1$, we have $\alpha_i = \Omega( \nbl^i \log n )$. Thus,
$\alpha_{\kk-1} = \Omega( \nbl^{\kk-1} \log n ) =
\Omega(\tfrac{1}{\prS}\log n)$. But then, with high probability there
is a direct edge from a vertex of $S_{\kk-1}$ to $j$. Thus, with high
probability, there is $\kk$-hop path from $i$ to $j$.

\begin{theorem}\RefProofInAppendix{k:hop} %
    \thmlab{k:hop}%
    Let $n > 0, k >3$ be two integers, and let $\prS \in (0,1)$ a
    parameter. The above constructed graph $\G$ over $\IRX{n}$ has
    $O( (n/\prS^{1+1/(\kk-1)}) \log n)$ edges. Furthermore, for a
    graph $\H \sim \DistrY{\G}{\prS}$, we have that $\H$ provides a
    $\leq \kk$-hop path for all pairs except (in expectation)
    $O( \kk n/\prS^{1+1/(\kk-1)} )$ pairs. For all pairs $i,j$, with
    $j > i + \Omega( \prS^{-1-1/(\kk+1)} \log n )$, such a $\kk$-hop
    straight path exists with high probability.
\end{theorem}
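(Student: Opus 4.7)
The plan is to mirror the proof of \lemref{4:hops} with the hop budget raised from $4$ to $\kk$ and the per-block expansion factor tuned to $\nbl = \prS^{-1/(\kk-1)}$, using a block-level spanner on $\BSet$ that routes any two blocks by a path of at most $\kk - 2$ block-hops so that, together with a short-range direct edge at each endpoint, one obtains a $\kk$-hop route. The argument splits into an edge count, a short-pair deficiency bound, and a long-pair expansion argument.

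For the edge count, the local short-range graph connecting every pair within $\sL = (\kk+4)\sB$ contributes $O(n\sL) = O(n\kk\sB)$ edges, while the block spanner has $O((n/\sB)\log(n/\sB))$ block-edges, each replaced by a random bipartite connector of expected size $\sB \cdot O(\nbl/\prS)$; plugging in $\sB = O((\nbl/\prS) \log n)$ and $\nbl = \prS^{-1/(\kk-1)}$ shows both contributions collapse to $O((n/\prS^{1+1/(\kk-1)}) \log n)$. For short pairs with $j - i \leq \sL$, the induced subgraph in $\G$ looks like a clique, so the analysis of \secref{short:pairs} combined with \lemref{l:b;rough} bounds their contribution to the expected deficiency.

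The heart is the long-pair analysis. For $j - i > \sL$, the block-level path has the form $B_\alpha \to B_{\alpha+1} \to B_{m_1} \to \cdots \to B_{m_{\kk-3}} \to B_{\beta-1} \to B_\beta$, where the first and last hops are direct short-range edges and the middle $\kk - 2$ hops pass through connectors. Let $S_t$ denote the set of vertices of the $t$\th block on this path that are reachable from $i$ via a $t$-hop straight path in $\H$. I would prove by induction --- invoking \lemref{silly:2} and the pattern established in \lemref{expander} --- that, with probability at least $1 - 1/n^{\Theta(\cKHop)}$, one has $|S_1| = \Omega(\prS \sB) = \Omega(\nbl \log n)$ from the first direct hop, and $|S_{t+1}| \geq \Omega(\nbl) \cdot |S_t|$ across each connector step until $|S_t|$ saturates near $\Omega(\sB)$. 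Iterating yields $|S_{\kk-1}| = \Omega(\nbl^{\kk-1} \log n) = \Omega(\tfrac{1}{\prS} \log n)$, after which the closing short-range hop from $S_{\kk-1}$ to $j$ succeeds with probability at least $1 - (1-\prS)^{\Omega((1/\prS)\log n)} = 1 - n^{-\Omega(1)}$. A union bound over the $O(n^2)$ long pairs delivers the high-probability statement.

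The main obstacle is proving the inductive growth invariant: the connector sampling rate must be tuned so that, simultaneously, the expected per-vertex degree is $O(\nbl/\prS)$ (controlling edge count) and each connector reliably lifts $|S_t|$ by a factor of $\nbl$ (driving the expansion through $\kk - 1$ steps). Requiring all $\kk - 1$ Chernoff concentration events to hold with polynomial inverse failure probability dictates a sufficiently large choice of $\cKHop$, exactly as in \lemref{expander}. Once this invariant is in place, the remaining deficiency bookkeeping --- combining the short-pair contribution with the additive term coming from long pairs that violate the invariant --- follows the template of \lemref{4:hops} almost verbatim.
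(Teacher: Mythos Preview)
Your edge count and the long-pair expansion argument are on target and match the paper's pre-theorem discussion essentially verbatim: the inductive growth $|S_t| = \Omega(\nbl^t \log n)$ until $|S_{\kk-1}| = \Omega(\tfrac{1}{\prS}\log n)$, followed by a direct closing edge, is exactly what the paper does. (One structural quibble: the block spanner in the construction is still the $2$-hop spanner of \lemref{2:hop}, not a $(\kk-2)$-hop one; the extra intermediate blocks in the path $i \to B_1 \to \cdots \to B_{\kk-1} \to j$ are consecutive blocks connected by the local short-range clique, which supplies full bicliques between adjacent blocks. This does not affect the expansion analysis, since a full biclique sampled at rate $\prS$ expands at least as well as a random connector.)

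The genuine gap is in the expected-deficiency bound. You appeal to \lemref{l:b;rough} for short pairs and say the rest ``follows the template of \lemref{4:hops} almost verbatim'' --- but that template only yields $n/\prS^2 + 1$ failed pairs, whereas \thmref{k:hop} claims the strictly sharper $O\bigl((n/\prS^{1+1/(\kk-1)})\log(1/\prS)\bigr)$. The gap between $\prS^{-2}$ and $\prS^{-1-1/(\kk-1)}$ is the whole point of the theorem, and the $2$-hop bound of \lemref{l:b;rough} cannot close it. The paper's proof introduces a new idea here: a rescaling (bootstrapping) argument. Rather than treating all pairs with $j - i \leq \sL$ uniformly via the $2$-hop bound, it reapplies the long-pair expansion analysis at artificial scales $n' = t/\prS$ for $t = 1, 2, \ldots$, concluding that a pair at separation $\Omega\bigl(\prS^{-1-1/(\kk-1)}\log(t/\prS)\bigr)$ already fails with probability at most $(\prS/t)^{O(1)}$. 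Summing these contributions over $t$ gives $O\bigl(\prS^{-1-1/(\kk-1)}\log(1/\prS)\bigr)$ expected failures per source vertex, and hence the claimed bound. Your plan needs this rescaling step; without it you recover only the weaker conclusion of \lemref{4:hops}.
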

\begin{proof:e}{\Xthmref{k:hop}}{k:hop}
    We only need to bound the expected number of pairs that are short
    and fail to have a $\kk$-hop path -- their distance is smaller
    than $O( (1/\prS^{1+1/(\kk-1)}) \log n)$ (all other claims are
    implied by the above analysis). This follows by proving a bound on
    the $\kk$-deficiency of $L_n$, which we done below. See
    \lemref{l:b:k}.
\end{proof:e}

\subsubsection{Bounding the \TPDF{$\kk$}{k}-hop deficiency of the
   clique}

We need the following straightforward extension of \lemref{success}
(the proof is essentially the same, so we omit it).

\begin{corollary}
    \corlab{success:k}%
    Consider breaking $\IRX{n}$ into blocks of size $\xi c / \prS$,
    where $\xi > 1$ is an integer, and $c$ is the constant from
    \Eqref{c:value}.  Consider two blocks $B$ and $B'$, where $B$
    appears before $B'$, and $\gC_{i-1} = \goodY{B}{i-1} > 0$ be all
    the vertices in $B$ that are reachable from $1$ by a straight path
    with $\leq i-1$ hops. We have:
    \begin{compactenumA}
        \smallskip%
        \item
        $\Prob{\bigl. \goodY{B'}{i} \geq \min( 2 \xi \gC_{i-1}, \sB/3
           ) } \geq 1/2$.

        \smallskip%
        \item If $\gC_{i-1} \geq \sB/3$, then
        $\Prob{\bigl. \goodY{B'}{i} \geq (2/3)\sB } \geq 1/2$.
    \end{compactenumA}
\end{corollary}

\begin{lemma}\RefProofInAppendix{l:b:k} %
    \lemlab{l:b:k}%
    We have $\lbZ{n}{\prS}{\kk} \leq O(\kk n/\prS^{1+1/(\kk-1)} )$,
    see \defref{deficiency}.
\end{lemma}
\begin{proof:e}{\Xlemref{l:b:k}}{l:b:k}
    We are going to bound the expected number of vertices in
    $\G \sim \DistrY{K_n}{\prS}$ that do not have a $\kk$-hop path
    from $1$. To this end, we break $\IRX{n}$ into blocks of size
    $\sB = \xi(c/\prS)$, where $\xi = \ceil{ 4/ \prS^{1/(\kk-1)}}$,
    and let $\BSet = \{ B_1, \ldots, B_{n/\sB}\}$ be this set of
    blocks. Let $X_1$ be the minimum index, such that
    $ \goodY{B_{X_1}}{1} > \xi$. Since $\Ex{\goodY{B_i}{1}} = c \xi$
    and $c > 2$, this implies that $\Ex{X_1} \leq 2$. More generally,
    for $t > 1$, let $X_t$ be the minimal index such that
    \begin{math}
        \goodY{B_{X_t}}{t} > \xi^t.
    \end{math}
    By \corref{success:k}, we have that $\Ex{X_t - X_{t-1}} \leq 2$.
    In particular, $\Ex{X_{\kk-1} } \leq 2 \kk$. Observe, that any
    vertex in a block $B_j$, with $j > X_{\kk-1} + \Delta$ has at
    least (in expectation) $\Delta/2$ blocks, where each one of them
    has at least $\xi^{\kk-1} > 4/\prS$ vertices that are
    $(\kk-1)$-reachable. Indeed, each block $B_s$, for
    $s \in \IRY{X_{\kk-1}}{X_{\kk-1}+ \Delta}$, has at least
    probability half of having $4/\prS$ vertices that are
    $(\kk-1)$-reachable, and these events are independent (we are
    using here \corref{success:k} on each one of these blocks). Thus,
    using Chernoff's inequality, there is some constant $c_8$, such
    that the probability there are not at least $\Delta/4$ such good
    blocks, for $B_j$ is at most $\exp( -\Delta/c_8)$. And
    furthermore, if $B_j$ is good in this sense, then the probability
    that a vertex $u \in B_j$ does not have a $\kk$-hop path to it is
    at most
    \NotSoCG{%
       \begin{equation*}
           (1-\prS)^{(4/\prS)(\Delta/4)}
           \leq
           \exp( -\Delta).
       \end{equation*}
    }%
    \SoCG{%
       \begin{math}
           (1-\prS)^{(4/\prS)(\Delta/4)}
           \leq
           \exp( -\Delta).
       \end{math}
    }

    In particular, we have that the expected number of vertices of
    $\IRX{n}$ that are unreachable by a $\kk$-hop straight path from
    $1$ is bounded by
    \begin{equation*}
        \Ex{X_{\kk-1} \sB} + \sum_{\Delta=1}^\infty \bigl(\exp(-\Delta/c_8) +
        \exp(-\Delta)\bigr) \sB
        =%
        O( \kk \sB).
        \SoCG{\qedhere}%
    \end{equation*}
\end{proof:e}

\section{A \dependable spanner in \TPDF{$\Re^d$}{Rd}}
\seclab{main}

We need to use \LSO{}s, see \defref{local}, and in particular, the
following result of \NotSoCG{Chan \etal }\cite{chj-lota-20} for
computing a universal set of \LSO{}s.

\begin{theorem}[\tcite{chj-lota-20}] %
    \thmlab{lso}%
    For $\epsA \in (0,1/2]$, there is a set $\ordAll$ of
    $O(\log (1/\epsA)/\epsA^d)$ orderings of $[0,1)^d$, such that for
    any two points $\p, \q \in [0,1)^d$ there is an ordering
    $\order \in \ordAll$ defined over $[0,1)^d$, such that for any
    point $\pC$ with $\p \prec_\order \pC \prec_\order \q$ it holds
    that either $\dY{\p}{\pC} \leq \epsA \dY{\p}{\q}$ or
    $\dY{\q}{\pC} \leq \epsA \dY{\p}{\q}$ (i.e., $\order$ is
    $\eps$-local for $\pA$ and $\pB$).

    Furthermore, given such an ordering $\order$, and two points
    $\p,\q$, one can compute their ordering, according to $\order$,
    using $O(d\logeps)$ arithmetic and bitwise-logical operations.
\end{theorem}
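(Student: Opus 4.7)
The plan is to adopt the shifted Morton-order construction of Chan \etal. I would build $\ordAll$ as $O(\log(1/\epsA)/\epsA^d)$ orderings, each given by a depth-first canonical-child traversal (Morton / $Z$-order) of a $v$-shifted quadtree of $[0,1)^d$, where $v$ ranges over a regular grid $V$ of $O(1/\epsA^d)$ shifts at each of $O(\log(1/\epsA))$ dyadic scales. The associated ordering $\order_v \in \ordAll$ compares two points by the lexicographic order of their bit-interleaved shifted coordinates.

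The central step is a locality lemma: for every $\p, \q \in [0,1)^d$ with $\ell = \dY{\p}{\q}$, there is a shift $v \in V$ such that in the $v$-shifted quadtree, every child of the lowest-common-ancestor cell of $\p$ and $\q$ that the DFS visits strictly between $\p$'s leaf and $\q$'s leaf lies inside $\ballY{\p}{\epsA \ell} \cup \ballY{\q}{\epsA \ell}$. I would prove this by a counting argument over the shift grid: a shift is bad for $(\p,\q)$ only when one of $\p, \q$ lies within $\epsA \ell$ of a grid hyperplane at one of $O(\log(1/\epsA))$ relevant scales, and the resulting bad set occupies strictly less than all of $V$ when $|V|$ is chosen as above.

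The $\epsA$-local conclusion then follows directly from the DFS structure: any $\pC$ strictly between $\p$ and $\q$ in $\order_v$ belongs to one of those visited children, and hence to a ball of radius $\epsA \ell$ around $\p$ or around $\q$. I expect the main obstacle to be the ``no intruding large subtree'' portion of the geometric lemma --- ruling out visits to cells of side $\Theta(\ell)$ strictly between $\p$ and $\q$ in the DFS --- which is precisely what drives the shift count to $\Omega(1/\epsA^d)$ per scale and is the reason a single shift is not enough. The algorithmic claim is then routine: evaluating $\order_v$ on $(\p,\q)$ reduces to comparing the bit-interleaved Morton codes of $v+\p$ and $v+\q$, truncated to $O(\logeps)$ bits per coordinate, which takes $O(d\logeps)$ word-level bitwise and arithmetic operations.
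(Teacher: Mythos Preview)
This theorem is not proved in the paper at all: it is quoted verbatim from Chan \etal \cite{chj-lota-20} and used as a black box in the proof of \thmref{main}. There is no ``paper's own proof'' to compare against. Your sketch is a reasonable outline of the original Chan--Har-Peled--Jones construction (shifted quadtrees, Morton/$Z$-order traversal, a counting argument over the shift grid to guarantee a good shift for every pair), so as a summary of the cited result it is fine; but for the purposes of this paper no proof is expected or required here.
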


\begin{theorem}%
    \thmlab{main}%
    Let $\P$ be a set of $n$ points in $\Re^d$, and let
    $\prS, \eps \in (0,1)$ be parameters.  One can construct a graph
    $\G$ over $\P$ with
    \begin{math}
        O\Bigl( \frac{C_\eps}{\prS^{4/3}} n \log n \Bigr)
    \end{math}
    edges, such that for all $\pA,\pB \in \P$, except maybe (in
    expectation) $O( C_\eps C_\prS n )$ pairs, we have that
    $\H \sim \DistrY{\G}{\prS}$ provides a $4$-hop path connecting
    $\pA$ and $\pB$, of length at most $(1+\eps)\dY{\pA}{\pB}$, where
    $C_\eps = O(\eps^{-d} \log \eps^{-1})$ and
    $C_\prS = O( \prS^{-4/3})$.
\end{theorem}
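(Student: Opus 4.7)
The plan is to combine the locality-sensitive orderings of \thmref{lso} with the one-dimensional dependable spanner of \thmref{k:hop} (applied with $\kk=4$). After the standard rescaling I may assume $\P \subseteq [0,1)^d$. Fix $\eps' = \eps/c$ for a sufficiently large absolute constant $c$, invoke \thmref{lso} with $\epsA = \eps'$, and let $\ordAll$ be the resulting set of $O((\eps')^{-d}\log(1/\eps')) = O(C_\eps)$ orderings.

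For each $\order \in \ordAll$, the total order it induces on $\P$ identifies $\P$ with $\IRX{n}$. I would then apply \thmref{k:hop} with $\kk=4$ to this one-dimensional instance and inject the resulting edges into $\G$ as edges between the corresponding points of $\P$. Summing over $\ordAll$, the edge count is $|\ordAll| \cdot O((n/\prS^{4/3})\log n) = O\bigl((C_\eps/\prS^{4/3})\,n\log n\bigr)$, matching the claim.

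For correctness, fix a pair $\pA,\pB \in \P$ and let $\H \sim \DistrY{\G}{\prS}$. By \thmref{lso} some $\order \in \ordAll$ is $\eps'$-local for $(\pA,\pB)$, and \thmref{k:hop} applied to this $\order$ tells us that, in expectation, all but $O(C_\prS n)$ pairs admit a $4$-hop straight path $\pA = \pC_0 \prec_\order \pC_1 \prec_\order \pC_2 \prec_\order \pC_3 \prec_\order \pC_4 = \pB$ inside $\H$. By $\eps'$-locality, each intermediate $\pC_i$ lies in $\ballY{\pA}{\eps'\dY{\pA}{\pB}} \cup \ballY{\pB}{\eps'\dY{\pA}{\pB}}$, and a triangle-inequality walk across the four hops caps the total length at $(1+O(\eps'))\dY{\pA}{\pB} \leq (1+\eps)\dY{\pA}{\pB}$ by choice of $c$. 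Associating each pair with one of its $\eps'$-local orderings and union-bounding failures across $\ordAll$ produces the claimed $O(C_\eps C_\prS n)$ expected failed pairs.

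The main technical point I expect to actually work through is the triangle-inequality accounting across the $4$-hop path: three of the four edges must contribute only $O(\eps'\dY{\pA}{\pB})$ each, and at most one ``middle'' edge can have length close to $\dY{\pA}{\pB}$, with a small casework on which of the two balls each $\pC_i$ belongs to. Everything else is bookkeeping on top of \thmref{k:hop} and \thmref{lso}: the edge count is a direct product, and the failure count is a union bound driven by the one-dim guarantee.
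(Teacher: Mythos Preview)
Your proposal is correct and follows essentially the same approach as the paper: compute $\eps'$-\LSO{}s (the paper takes $\eps' = \eps/8$), build the $4$-hop one-dimensional spanner of \thmref{k:hop} on each ordering, multiply the edge count by $|\ordAll|$, union-bound the per-ordering failures, and finish with the triangle-inequality argument that at most one of the four edges crosses between the two balls. The paper carries out the last step explicitly (three short edges contribute $\le 3\cdot 2\ell\eps/8$, the one crossing edge contributes $\le \ell + 2\ell\eps/8$), which is exactly the casework you flag as the main technical point.
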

\begin{proof}%
    We compute the set $\ordAll$ of $\eps/8$-\LSO{}s provided by
    \thmref{lso} for $\P$, where $C_\eps = \cardin{\ordAll}$. For each
    \LSO $\order \in \ordAll$, construct the graph $\G_\order$ of
    \thmref{k:hop} (with $\kk=4$) over the points of $\P$, and let
    $\G$ be the union of all these graphs.

    Now, for any $\pA, \pB \in \P$, let $\order \in \ordAll$ be their
    $\eps/8$-local order. In expectation, except for $O(n/\prS^{4/3})$
    pairs, all other pairs have a $4$-hop path in
    $\G_{\order} \sim \DistrY{\G_\order}{\prS}$.  Assuming $\pA$ and
    $\pB$ have such a path
    $\pi \equiv \pA \rightarrow p_1 \rightarrow p_2 \rightarrow
    p_3\rightarrow \pB$ in $\G_\order$.  Let $\ell = \dY{\prS}{q}$.
    Observe that all the edges in this path, except exactly one
    segment, are either in $\ballY{\prS}{\ell \eps/8}$ or
    $\ballY{\prS}{\ell \eps/8}$. The total length of these short edges
    of $\pi$ is thus
    \begin{equation*}
        \leq 3 \cdot 2 \ell\eps/8 \leq (3/4)\eps \ell.
    \end{equation*}
    The single long edge across the two balls in the path has length
    $\leq \ell + 2(\eps \ell/8)$. Thus, the total length of
    \NotSoCG{the path }$\pi$ is at most
    $\ell + \eps \ell /4 + (3/4)\eps \ell= (1+\eps)\ell$, which
    implies the claim. ~
\end{proof}

An alternative approach is to set $\kk = \ceil{\log(1/\prS)}$, use
$\eps/(2\kk)$-\LSO{}s, and plug it into the above machinery. This
leads to the following.

\begin{corollary}
    \corlab{main}%
    Under the settings of \thmref{main}. one can construct a graph
    $\G$ over $\P$ with
    \begin{math}
        O\bigl( C \cdot \frac{1}{\eps^d \prS} \cdot n \log n \bigr)
    \end{math}
    edges, with
    $C = O \bigl( \log^d \tfrac{1}{\prS} \log \frac{\log
       (1/\prS)}{\eps} \bigr)$, such that for all $\pA,\pB \in \P$,
    except maybe (in expectation)
    \begin{math}
        O\bigl( n \, C \frac{1}{\eps^d \prS} \log \tfrac{1}{\prS}
        \bigr)
    \end{math}
    pairs, we have that $\H \sim \DistrY{\G}{\prS}$ provides a
    $2 \ceil{\log\tfrac{1}{\prS}}$-hop path connecting $\pA$ and
    $\pB$, of length at most $(1+\eps)\dY{\pA}{\pB}$.
\end{corollary}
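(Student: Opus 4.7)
The plan is to recycle the proof of \thmref{main} while tuning the hop budget to $\kk = \ceil{\log(1/\prS)}$ and the LSO precision to $\epsA = \eps/(2\kk)$. I would first invoke \thmref{lso} with this $\epsA$ to obtain a universal set $\ordAll$ of $\epsA$-\LSO{}s, giving $\cardin{\ordAll} = O\bigl( \epsA^{-d} \log(1/\epsA) \bigr) = O\bigl( \kk^d \eps^{-d} \log(\kk/\eps) \bigr)$, which matches the stated factor $C/\eps^d$ with $C = O\bigl(\log^d(1/\prS) \log(\log(1/\prS)/\eps)\bigr)$. For each $\order \in \ordAll$ I would take $\G_\order$ to be the one-dimensional $\kk$-hop \dependable spanner of \thmref{k:hop} built on $\P$ placed in the order $\order$, and let $\G$ be the union of these graphs.

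The edge count and failure count then reduce to bookkeeping, driven by the elementary identity that $\prS^{1/(\kk-1)} = \Theta(1)$ for $\kk = \ceil{\log(1/\prS)}$; this is the whole reason behind this particular choice of $\kk$, as it brings the per-LSO edge count from \thmref{k:hop} down to $O\bigl((n/\prS) \log n\bigr)$, and the per-LSO expected number of pairs without a surviving $\kk$-hop straight path down to $O\bigl((n/\prS) \log(1/\prS)\bigr)$. Summing over the $\cardin{\ordAll}$ orderings yields the claimed bounds on $\cardin{\EGX{\G}}$ and on the total expected number of failed pairs.

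The only mildly substantive step, and the main (still modest) obstacle, is verifying that the surviving $\kk$-hop path has length at most $(1+\eps)\ell$. For a pair $\pA, \pB \in \P$ with $\ell = \dY{\pA}{\pB}$, let $\order \in \ordAll$ be the ordering that is $\epsA$-local to this pair, and take a surviving straight path $\pA = p_0 \prec_\order p_1 \prec_\order \cdots \prec_\order p_\kk = \pB$ in $\H \sim \DistrY{\G}{\prS}$. Locality partitions $p_1, \ldots, p_{\kk-1}$ into a prefix inside $\ballY{\pA}{\eps\ell/(2\kk)}$ and a suffix inside $\ballY{\pB}{\eps\ell/(2\kk)}$. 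Exactly one edge of the path crosses between the two balls, and by the triangle inequality has length at most $\ell + 2\epsA \ell = \ell + \eps\ell/\kk$; each of the remaining $\kk-1$ edges lies inside one ball and so has length at most $2\epsA \ell = \eps\ell/\kk$. Summing gives total length at most $\ell + \kk \cdot (\eps\ell/\kk) = (1+\eps)\ell$, as required. Since a $\kk$-hop path is trivially a $2\ceil{\log(1/\prS)}$-hop path, this completes the proof.
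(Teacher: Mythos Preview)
Your proposal is correct and follows exactly the approach the paper sketches: the paper's entire argument for this corollary is the single sentence ``set $\kk = \ceil{\log(1/\prS)}$, use $\eps/(2\kk)$-\LSO{}s, and plug it into the above machinery,'' and you have carried out precisely that plug-in, including the edge and failure bookkeeping and the $(1+\eps)$-stretch verification. If anything, you give more detail than the paper itself.
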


\section{Conclusions}
\seclab{conclusions}

We leave many open problems for further research. First issue (but
arguably not that exciting) is finetuning the parameters -- can the
dependable spanner construction dependency be improved to
$1/\eps^{d-1}$ instead of $1/\eps^d$ (ignoring \si{polylogs}). The
recent work of Gao and Har-Peled \cite{gh-nolso-24} suggests this
should be doable. The current construction should also apply in other
settings where \LSO{}s are known, but maybe one can do better in some
of these cases? In the same vein, can one improve the dependency on
$\prS$ in the dependable spanner construction?

A potentially more interesting problem is trying to extend the results
when the probability of failure for every pair of points is provided
explicitly. Can one compute a good dependable spanner in such a case
of near optimal size?

\paragraph{Acknowledgement.}

We thank the anonymous referees for their numerous insightful
comments. In particular, they exposed a (arguably minor) mistake in a
previous version of this writeup, which is now corrected.

\BibTexMode{%
   \NotSoCG{
      \bibliographystyle{alpha}%
   }%
   \SoCG{%
      \bibliographystyle{plainurl}%
   }

   \bibliography{dependable_spanner}%
}%
\BibLatexMode{\printbibliography}

\appendix

\section{Standard tools}

\begin{theorem}[Chernoff's inequality]
    \thmlab{chernoff}%
    Let $X_1, \ldots, X_n \in \{0,1\}$ be $n$ independent random
    variables, with
    \begin{math}
        p_i = \Prob{ \bigl. X_i = 1 },
    \end{math}
    for all $i$.  let $X = \sum_{i=1}^{n} X_i$, and
    $\mu = \Ex{\bigl. X} = \sum_i p_i$. For all $\delta \geq 0$, we
    have
    \begin{math}
        \Prob{\bigl. X < (1-\delta)\mu } < \exp\bigl(-\mu\delta^2/2
        \bigr).
    \end{math}
\end{theorem}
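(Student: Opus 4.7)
The plan is to use the classical Chernoff/Bernstein exponential-moment method: apply Markov's inequality to a well-chosen exponential transformation of $X$, factor the moment generating function using independence, then optimize the free parameter, and finally bound the resulting rate function by the clean Gaussian-type expression $-\delta^2/2$.

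First, I would dispose of the trivial case $\delta \geq 1$ (where $(1-\delta)\mu \leq 0$ and $X \geq 0$, so the probability is zero). For $\delta \in [0,1)$ and any $t > 0$, applying Markov's inequality to the nonnegative random variable $e^{-tX}$ gives
\begin{equation*}
    \Prob{X < (1-\delta)\mu}
    \;=\;
    \Prob{e^{-tX} > e^{-t(1-\delta)\mu}}
    \;\leq\;
    e^{t(1-\delta)\mu}\,\Ex{e^{-tX}}.
\end{equation*}
Independence of the $X_i$ factorizes the moment generating function: $\Ex{e^{-tX}} = \prod_i \Ex{e^{-tX_i}} = \prod_i \bigl(1 + p_i(e^{-t}-1)\bigr)$. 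Since $e^{-t}-1 \in (-1,0]$ each factor is of the form $1+x$ with $x \leq 0$, so the elementary inequality $1+x \leq e^{x}$ yields $\Ex{e^{-tX}} \leq \exp\!\bigl(\mu(e^{-t}-1)\bigr)$, using $\sum_i p_i = \mu$.

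Next I would optimize the free parameter. Taking the derivative of $t(1-\delta)\mu + \mu(e^{-t}-1)$ with respect to $t$ and setting it to zero gives $e^{-t} = 1-\delta$, i.e., $t = \ln\!\bigl(1/(1-\delta)\bigr) \geq 0$, which is admissible. Substituting back collapses the bound to the well-known relative-entropy form
\begin{equation*}
    \Prob{X < (1-\delta)\mu}
    \;\leq\;
    \exp\!\Bigl(\mu\bigl[-(1-\delta)\ln(1-\delta) - \delta\bigr]\Bigr).
\end{equation*}

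The final step, and the only nontrivial point, is the analytic comparison $-(1-\delta)\ln(1-\delta) - \delta \leq -\delta^2/2$ on $[0,1)$. I expect this to be the main obstacle in presentation, though it is elementary: using the power series $\ln(1-\delta) = -\sum_{k \geq 1}\delta^k/k$, multiplication by $(1-\delta)$ telescopes to $(1-\delta)\ln(1-\delta) = -\delta + \sum_{k \geq 2} \delta^k/\bigl(k(k-1)\bigr)$, hence
\begin{equation*}
    -(1-\delta)\ln(1-\delta) - \delta
    \;=\;
    -\sum_{k \geq 2} \frac{\delta^k}{k(k-1)}
    \;\leq\;
    -\frac{\delta^2}{2},
\end{equation*}
since every term in the series is nonnegative. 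Combining this with the optimized Chernoff bound above gives the claimed $\exp(-\mu\delta^2/2)$, and the strict inequality is preserved because Markov's inequality is strict whenever the random variable is not almost surely at the threshold.
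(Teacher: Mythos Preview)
Your proof is correct and is the standard exponential-moment derivation of the lower-tail Chernoff bound. However, there is nothing to compare against: the paper does not prove this theorem at all --- it is simply stated in the appendix under ``Standard tools'' and invoked as a black box throughout. So your write-up supplies a proof where the paper gives none.

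One small edge case worth tidying: when $\delta = 0$, your optimization yields $t = \ln\bigl(1/(1-\delta)\bigr) = 0$, which falls outside the range $t > 0$ you assumed for Markov. The claim there reduces to $\Prob{X < \mu} < 1$, which follows directly since $X$ is integer-valued and $\Ex{X} = \mu$ (if $X < \mu$ almost surely then $X \leq \lfloor \mu \rfloor$ or $X \leq \mu - 1$, forcing $\Ex{X} < \mu$). Your justification of the strict inequality via ``Markov is strict unless the variable sits at the threshold'' is a bit loose; the cleaner route is to note that $1 + x < e^x$ for $x \neq 0$, so as soon as some $p_i \in (0,1)$ and $t > 0$ the MGF bound is already strict, and if every $p_i \in \{0,1\}$ then $X$ is deterministic and the probability is zero.
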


\InsertAppendixOfProofs

\end{document}